\def\XXint#1#2#3{{\setbox0=\hbox{$#1{#2#3}{\int}$}
    \vcenter{\hbox{$#2#3$}}\kern-.5\wd0}}
    \newcommand\email[1]{\_email #1\q_nil}
    \def\_email#1@#2\q_nil{%
      \href{mailto:#1@#2}{{\emailfont #1\emailampersat #2}}
    }
    \newcommand\emailfont{\sffamily}
    \newcommand\emailampersat{{\color{cyan}\small@}}
\def\({\left(}
\def\){\right)}
\newcommand{\eps}{\epsilon}
\newcommand{\be}{\begin{equation}}
\newcommand{\ee}{\end{equation}}
\newcommand{\bea}{\begin{eqnarray}}
\newcommand{\eea}{\end{eqnarray}}
\newcommand{\beann}{\begin{eqnarray*}}
\newcommand{\eeann}{\end{eqnarray*}}
\newcommand{\nnn}{\nonumber}
 \newtheorem{theorem}{Theorem}[section]
\newtheorem{remark}[theorem]{Remark}
\newtheorem{lemma}[theorem]{Lemma}
\numberwithin{equation}{section}
\begin{document}
\setcounter{page}{1} 
\noindent 

\title{{\footnotesize Manuscript submitted September 30, 2018; revised April 9, 2019}\\
\bf{Semi-classical analysis with new Galilean transformations for a Gross--Pitaevskii system
with non-zero conditions at infinity}}

\author{Qi Gao
\thanks{Department of Mathematics, School of Science, Wuhan University of Technology, Wuhan, 430070, P.R. China,
\email{gaoq@whut.edu.cn}}
 \and Chiun-Chang Lee
\thanks{Institute for Computational and Modeling Science, National Tsing Hua University,
 Hsinchu 30014, Taiwan, \email{chlee@mail.nd.nthu.edu.tw}}
 \and Tai-Chia Lin
 \thanks{Department of Mathematics, National Taiwan University, Taipei 10617, Taiwan, \email{tclin@math.ntu.edu.tw}}
  }

\thispagestyle{empty}
\date{}
\maketitle

\begin{abstract}
Recently, a rich variety of the micro-phenomena of the superfluid passing an obstacle has been observed in the binary mixture of rotating Bose--Einstein condensates (BECs). Among such phenomena, the interaction of dark--bright solitons is one of the most
important issues. In this work we investigate the semi-classical limit for a coupled system of Gross--Pitaevskii (GP) equations with rotating fields and trap potentials in a two-dimensional exterior domain, where the superfluid is non-vanishing at infinity.  We establish a new Galilean type transformation and follow the argument of the modulated energy functional (a Lyapunov type functional) in \cite{ll08,lz05} to control the propagation of mass densities and linear momenta of the solution via a compressible Euler equation with Coriolis force in a semi-classical regime. Moreover, the effect of the rotating field on the superfluid in the region far away from the obstacle is precisely described.

\bigskip

\noindent
{\footnotesize {\bf Keywords.} Rotating Bose--Einstein condensates; Dark--bright solitons; Gross--Pitaevskii equations; Semi-classical limit;\\ \hspace*{46pt}  Galilean transformation}

\medskip

\noindent
{\footnotesize {\bf MSC subject classification.} 35B25, 35Q55, 76F02}

\end{abstract}


\section{Introduction}

\noindent

A wider variety of Gross--Pitaevskii (GP) equations~\cite{ecp02,gp58,hh02,llbk89} have been broadly used as fundamental models
for describing superfluidity of Bose--Einstein condensates~(BECs).
In particular, such models can also be used to describe the micro-phenomena of the superfluid passing an obstacle~\cite{ba01, egk06, fpr92, gk08, jp01,mzy07,nfk08, skc07,uiw06}.
Recently, theories in rotating two-component BECs have stimulated new interest in the
coupled system of GP equations with rotating fields and trap potentials~\cite{amw12,imw04,llm05,ps2003bk}.
Hence, based on those physical phenomena and the related mathematical descriptions in~\cite{ll08, lz05, lz06, liu06},
we are interested in the situation that the superfluid is non-vanishing far away from the obstacle.

We consider such models in $2+1$ dimensional space-time domains.
Let $\Omega:=\mathbb{R}^2\setminus\mathcal{K}$ be an exterior domain, where $\mathcal{K}$ represents the obstacle which
is a simply connected compact subset of $\mathbb{R}^2$.
We shall investigate the semi-classical asymptotics for a coupled system of
GP equations with a rotating field~$A$ and a trap potential~$V$ in $[0,\infty)\times\overline{\Omega}$:
\begin{equation}\label{trap}
\left\{
\begin{array}{ll}
i\epsilon\partial_t \psi^{\epsilon}_1 =-{1\over2}\Delta^{\epsilon, \eta}_A \psi^\epsilon_1 +V \psi^\epsilon_1 +|\psi^\epsilon_1 |^2 \psi^\epsilon_1 +\gamma |\psi^\epsilon_2 |^2 \psi^\epsilon_1,&\quad\,\,\, x\in\Omega,\ t>0,\\
i\epsilon \partial_t \psi^{\epsilon}_2 =-{1\over2}\Delta^{\epsilon, \eta}_A \psi^\epsilon_2 +V \psi^\epsilon_2 +|\psi^\epsilon_2 |^2 \psi^\epsilon_2 +\gamma |\psi^\epsilon_1 |^2 \psi^\epsilon_2,&\quad\,\,\, x\in\Omega,\ t>0,\\
 \psi^{\epsilon}_k \big|_{t=0} = \psi^{\epsilon}_{k,0}(x):=\sqrt{\rho^\eps_{k,0}(x)}\exp\left({\frac{i}{\eps}S^\eps_{k,0} (x)}\right),&\quad\,\,\, x\in\Omega,\\
 \rho^{\epsilon}_{k,0}(x)\to {a_k},\quad\mathrm{and}\quad{S}^\eps_{k,0} (x)\to{U}^{\infty}\cdot {x}, &\quad\text{as}\,\, |x|\to\infty,\,\, t>0.
 \end{array}
\right.
\end{equation}
 On the boundary
 $\partial\Omega$, following \cite{hg74}
we impose the Neumann boundary conditions as below: 
\begin{equation}\label{0219-7}
\left.\frac{\partial\psi^\epsilon_k}{\partial{\vec{n}}}\right|_{\partial\Omega}=0,\ \ \text{for}\ t>0,\ k=1, 2, 
\end{equation}
where $\vec{n}(x)$ is the unit outward normal vector to $\partial\Omega$ at
$x\in\partial\Omega$.

For the model \eqref{trap}, $0<\eps\ll1$
 is a dimensionless parameter scaled by the Planck's constant $\hbar$ ($\eps$ is usually called the semi-classical parameter, see \cite{lw12} for the detailed dimensional formulation),
$i=\sqrt{-1}$, and
$\psi^\eps_k\equiv\psi^\eps_k (t,x)\in\mathbb{C}$
are complex-valued functions defined in $2+1$ dimensional space-time for $k=1,2$.
The rotating field $A\equiv{A}(t,x)\in\mathbb{R}^2$ is a vector-valued function, and
the operators $\nabla^{\eps,\eta}_A$ and $\Delta^{\eps, \eta}_A$  are defined by
\begin{align}\label{0219-1}
\begin{cases}
\nabla^{\eps,\eta}_A:=\eps\nabla-\eps^\eta (iA),\\
\Delta^{\eps, \eta}_A:=\nabla^{\eps, \eta}_A \cdot\nabla^{\eps, \eta}_A
=\eps^2\Delta-2\eps^{1+\eta}(iA\cdot\nabla)-\eps^{2\eta}|A|^2,
\end{cases}
\end{align}
where $\eta\geq0$, $\nabla=(\partial_{x_1}, \partial_{x_2})$,
 ``$\cdot$" denotes the inner product of vectors, and $|A|^2 =A\cdot A$.
Besides, $U^\infty$ is a constant two-vector, and $a_k \ge0$, $k=1, 2$,
 $\gamma\ge1$ are constants. All of those parameters are  independent of $\eps$.

We stress that the initial data $\psi^{\epsilon}_{k,0}$'s presented in \eqref{trap} come from the concept of
the standard Madelung transformation~\cite{m27}, where $\rho^\eps_{k,0}=|\psi^{\epsilon}_{k,0}|^2$ is the mass density,
 and the dimensionless quantity $\frac{S^\eps_{k,0}}{\eps}\,(\sim\frac{S_{k,0}}{\hbar})=\mathbf{arg}\psi^{\epsilon}_{k,0}:=\arctan\frac{\Im\psi^{\epsilon}_{k,0}}{\Re\psi^{\epsilon}_{k,0}}$ is the phase function. Here $\Re$ and $\Im$ represent the real and imaginary parts, respectively. We will frequently use these two notations without further comments.

Among the phenomena of two-component BECs, the interaction of dark--bright solitons is one of the most important (cf. \cite{ba01,ychkeafc2011}); see also, a survey paper~\cite{kf2016} and references therein. Motivated by these investigations, we focus on the competition of the interaction forces between the dark- and bright-soliton components and illustrate the semi-classical limit of (\ref{trap}) with the boundary condition~(\ref{0219-7}). Accordingly, in \eqref{trap} we assume that each mass density~$\rho^\eps_{k,0}$ is preserved as a constant~$a_k$ and
\begin{align}\label{0302a}
(a_1,a_2)=(1,0)\,\,\mathrm{or}\,\,(0,1).
\end{align}
Note also that the superfluid is non-vanishing far away from the obstacle since $a_1+a_2\neq0$. Furthermore, when far away from the obstacle, we impose a simplified assumption on the initial data $S^\eps_{k,0}$ such that both phase functions~$S^\eps_{1,0}$ and $S^\eps_{2,0}$ act as a same constant velocity~$U^\infty$. In view of the related physical background, it is expected an interesting phenomenon for the case that $S^\eps_{k,0}(x)\stackrel{|x|\to\infty}{-\!\!\!-\!\!\!-\!\!\!-\!\!\!-\!\!\!\longrightarrow}{U}^\infty_k\cdot{x}$ with ${U}^\infty_1\neq{U}^\infty_2$, however the mathematical rigorous analysis seems to be of a great challenge. We will make a brief discussion on this issue in Section \ref{sec-remark}.

The regularity issue for solutions $(\psi^{\epsilon}_1,\psi^{\epsilon}_2)$ to the system $\eqref{trap}$
with boundary conditions~(\ref{0219-7})
depends on the {\it smoothness} of $A$, $V$ and initial data $\psi^{\epsilon}_{k,0}$'s.
To get sufficient regularity of $(\psi^{\epsilon}_1,\psi^{\epsilon}_2)$
so that we can study its asymptotic limits,
for each $\epsilon>0$ we need to make the following assumptions:

\begin{itemize}
\item The initial data $\psi^{\epsilon}_{k,0}(x)$'s satisfy
\begin{equation}\label{0225-1}
 \psi^{\epsilon}_{k,0}(x)-\sqrt{a_k}\exp\left(\frac{i}{\epsilon}U^{\infty}\cdot x\right)\in H^4(\Omega;\mathbb{C})\cap\mathcal{C}^\infty (\overline{\Omega};\mathbb{C}),
\end{equation}
and are compatible with
the Neumann boundary conditions \eqref{0219-7}.


\item $A\equiv\,A(t,x)\in\mathbb{R}^2$ satisfies
\begin{equation}\label{0219-2}
A-A^\infty \in\mathcal{C}^2 ([0,\infty);  H^3 (\Omega; \mathbb{R}^2)
\cap\mathcal{C}^\infty(\overline{\Omega}; \mathbb{R}^2)),
\end{equation}
\begin{equation}\label{0219-4}
A\cdot\vec{n}=0\ \ \text{on}\ \ \partial\Omega,\quad\mathrm{for}\,\, t\ge0,
\end{equation}
and the divergence free condition
\begin{equation}\label{0219-3}
\text{div}A:=\nabla\cdot A=0\ \ \text{in}\ \ \Omega,\quad\mathrm{for}\,\, t\ge0, 
\end{equation}
where $A^\infty$ is a constant two-vector.
\item $V\equiv V(t,x)\in\mathbb{R}$ is a real-valued function satisfying
\begin{equation}\label{0219-5}
V-V^\infty\in\mathcal{C}^1 ([0,\infty);H^2 (\Omega; \mathbb{R})\cap\mathcal{C}^\infty (\overline{\Omega};\mathbb{R})), 
\end{equation}
where $V^\infty$ is a scalar constant.
\end{itemize}

 Let us here stress the meaning of the divergence free condition~(\ref{0219-3}). Indeed, (\ref{0219-3}) verifies that $A=\nabla\times\vec{E}$ can be expressed as rotation of a vector field~$\vec{E}$. In this situation, $\vec{E}$ is referred to as a vector potential~(see, e.g., \cite[Section~7.1.3.3]{r2017}). In particular, when $\mathcal{K}$ is an unit disk centered at the origin, a typical example of rotating field is introduced as below:
\begin{align*}
A(t,x=(x_1,x_2))=\left\{
\begin{array}{ll}
\omega_0 (t)(-x_2,x_1)&\text{in}\ \ [0,\infty)\times({B}_{R_1}\setminus\mathcal{K}),\\
\omega_0 (t)A^{\infty} &\text{in}\ \ [0,\infty)\times(\Omega \setminus B_{R_2}),
\end{array}
\right.
\end{align*}
and $A(t,x)$ is sufficiently smooth in $[0,\infty)\times(\overline{B_{R_2}}\setminus{B_{R_1}})$ so that \eqref{0219-2} holds,
where $\omega_0(t)-1\in\mathcal{C}^2([0,\infty))$, and $B_{R_i}$ is a disk of radius $R_i$ centered at the origin in $\mathbb{R}^2$, $i=1,2$, and $1\ll {R_1} < R_2$. With this definition, it guarantees that $A(t,x)$ is well-defined in $\mathbb{R}^2$. Besides, it also satisfies the condition \eqref{0219-4} and \eqref{0219-3}. Due to \cite{ll08}, it is expected that the rotation term $A$ will bring in some interesting phenomenon and make the mathematical analysis meaningful and challenging.

Under the above assumptions, we can apply the argument of the high-order energy estimates proposed in the Appendix of \cite{lz05}
to the model~\eqref{trap} with the boundary conditions~\eqref{0219-7},
and get the global existence, uniqueness and regularity of $(\psi^{\epsilon}_1,\psi^{\epsilon}_2)$ as follows (see Remark~\ref{rk-1123}):
\begin{align}
\partial_t^j\partial_x^{\alpha}\left(\psi_1^\eps (t,x)-\sqrt{a_1}
e^{\frac{i}{\eps}\left\{U^\infty\cdot x-\left[\frac{1}{2}|U^\infty-\eps^{\eta}A^{\infty}|^2+V^{\infty}+(1+(\gamma-1)a_2)\right]t\right\}}\right)\in{L}^{\infty}([0,T];H^{4-2j-|\alpha|}(\Omega;\mathbb{C})),\label{schrod-1123-1}\\
\partial_t^j\partial_x^{\alpha}\left(\psi_2^\eps (t,x)-\sqrt{a_2}
e^{\frac{i}{\eps}\left\{U^\infty\cdot x-\left[\frac{1}{2}|U^\infty-\eps^{\eta}A^{\infty}|^2+V^{\infty}+(1+(\gamma-1)a_1)\right]t\right\}}\right)\in{L}^{\infty}([0,T];H^{4-2j-|\alpha|}(\Omega;\mathbb{C})),\label{schrod-1123-2}
\end{align}
for any $T<\infty$ and $0\leq2j+|\alpha|\leq{4}$.

Before introducing our main result,
we shall trace back to the work~\cite{lz05} for a rigorous mathematical study on a single GP model
which investigates a superfluid passing an obstacle. There they considered
the semi-classical limit (as $\eps\downarrow0$) for
\begin{equation}\label{lz}
i\eps \partial_t \psi^\eps =-\frac{\eps^2}{2}\Delta\psi^\eps +(|\psi^\eps |^2 -1)\psi^\eps,\quad\mathrm{for}\,\,x\in\Omega,\,\,t>0,
\end{equation}
with initial data~$\psi^\eps_0(x)=\sqrt{\rho^\eps_0 (x)}\exp\left({i\over\eps}S^\eps_0 (x)\right)$ and Neumann boundary condition~$\frac{\partial\psi^\eps}{\partial\vec{n}}=0$ on $\partial\Omega$.
Note also that $\Omega$ is an exterior domain in $\mathbb{R}^2$.
When far away from the obstacle in particular,
 the mass density~$\rho^\eps_0$ is non-vanishing, and the phase function~$S^\eps_0$ acts as a constant velocity~$u^\infty$
so that $\psi^\eps_0(x)-\exp\left(\frac{i}{\eps}u^\infty\cdot x\right)\in{H}^s(\Omega)$ for some $s\geq3$.
As a consequence, for each $\eps>0$, they obtained the existence and uniqueness of solution~$\psi^\eps$ satisfying
\begin{align}\label{inbehav}
\partial_t^j\partial_x^{\alpha}\left(\psi^\eps (t,x)-
e^{\frac{i}{\eps}\left(u^\infty\cdot x-\frac{|u^\infty|^2}{2}t\right)}\right)\in{L}^{\infty}([0,T];H^{s-2j-|\alpha|}(\Omega;\mathbb{C})),
\end{align}
for any $T<\infty$ and $0\leq2j+|\alpha|\leq{s}$. Furthermore, by making appropriate assumptions on $\rho^\eps_0$ and $S^\eps_0$ (see (A1)--(A3) in \cite{lz05} for the precise statements),
they established the semi-classical limits for the mass density~$\rho^\eps(t,x):=|\psi^\eps(t,x)|^2$
and the linear momenta ${J}^{\eps}(t,x):=\rho^\eps(t,x)\nabla{S}^\eps(t,x)=\eps\Im(\overline{\psi^\eps(t,x)}\nabla\psi^\eps(t,x))$ 
in the sense that for any $T\in(0,T^*)$,
\begin{align}\label{conv0302}
||\rho^\eps-\rho||_{L^2(\Omega)}+||J^\eps-\rho{u}||_{L^1_{\mathrm{loc}}(\Omega)}\to0\ \ \mathrm{uniformly\,\,in}\,\,[0,T],
\end{align}
where $T^*$ is a finite time so that $(\rho,u)\in\bigcap_{j=0}^3\mathcal{C}^j([0,T^*),H^{3-j}(\Omega))$ uniquely solves the following compressible Euler equation
in local time
\begin{equation}\label{comeuler}
\left\{
\begin{array}{l}
\partial_t \rho+\text{div}(\rho u)=0,\hspace{88pt}{x\in\Omega},\,\,t\in(0,T^*),\\
\partial_t u+(u\cdot\nabla)u+\nabla\rho =0,\hspace{59pt}{x\in\Omega},\,\,t\in(0,T^*),\\
u(0,x)=u_0 (x),\ \  \rho(0,x)=\rho_0 (x),\hspace{23pt}{x\in\Omega},
\end{array}
\right.
\end{equation}
with the slip boundary condition~$u\cdot\vec{n}|_{\partial\Omega}=0$.

We shall point out that (\ref{conv0302}) and (\ref{comeuler}) are obtained from two major ideas. This is useful for studying the asymptotics of solutions to the model (\ref{trap}). For the first idea, let us review the connection between (\ref{lz}) and (\ref{comeuler}) from the physical framework. In the situation that there is no superfluid at infinity, we may apply the
standard Madelung transformation to (\ref{lz}). By separating the real and imaginary parts and introducing $u^\eps=\nabla{S}^\eps$,
(\ref{lz}) is transformed into a quantum hydrodynamic system (also called the quantum Euler equation)
\begin{equation}
\left\{
\begin{array}{l}
\partial_t \rho^\eps+\text{div}(\rho^\eps {u}^\eps)=0,\\
\partial_t u^\eps+(u^\eps\cdot\nabla)u^\eps+\nabla\rho^\eps =\frac{\eps^2}{2}\nabla\left(\frac{\Delta\sqrt{\rho^\eps}}{\sqrt{\rho^\eps}}\right),
\end{array}
\right.\notag
\end{equation}
where $\frac{\eps^2}{2}\nabla\left(\frac{\Delta\sqrt{\rho^\eps}}{\sqrt{\rho^\eps}}\right)$ is the so-called quantum pressure (see \cite{m27}).
Due to the Feshbach resonance (see \cite{pgkb04}),
the quantum pressure can be neglected as $\eps$ approaches zero.
Consequently,
 $(\rho^\eps,u^\eps)$ may formally approximate to $(\rho,u)$ which is the solution of (\ref{comeuler}) with $u^\infty=\vec{\boldsymbol{0}}$. However, if the superfluid with an initial data $\nabla{S}^\eps_0(x)$ is non-vanishing at infinity,
the behavior of the superfluid at infinity seems nontrivial as time $t$ goes forward, and this is not easy to determine.
In this case, the Madelung transformation is not enough to describe the exact asymptotics of $\nabla{S}^\eps(t,x)$
at infinity. To get a clear picture on the behavior of $\nabla{S}^\eps$,
a crucial ingredient is the {\it Galilean invariance} (see \cite{ssbk99}). 
Lin and Zhang in \cite{lz05} introduced
  this concept and
used the {\it Galilean transformation}~$\widetilde{\mathcal{G}}_{u^\infty}$ for $\psi^\eps$:
\begin{align}\label{galilean-iv}
\widetilde{\mathcal{G}}_{u^\infty}[\psi^\eps(t,x)]:=\psi^\eps(t,x+u^\infty{t})e^{-\frac{i}{\eps}\left(u^\infty\cdot{x}+\frac{|u^\infty|^2}{2}t\right)}.
\end{align}
Denote by $\phi^\eps(t,x)\equiv\widetilde{\mathcal{G}}_{u^\infty}[\psi^\eps(t,x)]$, $\phi^\eps$ then fits the same form as GP equation~(\ref{lz}). In other words, the GP equation \eqref{lz} is {\it formally} invariant under the Galilean transformation \eqref{galilean-iv}.
This is so-called the {\it Galilean invariance}. Note that under the transformation $\widetilde{\mathcal{G}}_{u^\infty}$, the initial data $\phi_0^\eps(x)\equiv\widetilde{\mathcal{G}}_{u^\infty}[\psi_0^\eps(x)]=\psi_0^\eps(x)e^{-\frac{i}{\eps}u^\infty\cdot{x}}\to1$; i.e., without the velocity effect as $|x|\to\infty$.
Furthermore, a standard regularity theorem allows $\phi^\eps(t,x)\to1$ as $|x|\to\infty$,
which implies
$\psi^\eps(t,x)=\phi^\eps(t,x-u^\infty{t})e^{\frac{i}{\eps}\left(u^\infty\cdot{x}-\frac{|u^\infty|^2}{2}t\right)}\to{e}^{\frac{i}{\eps}\left(u^\infty\cdot{x}-\frac{|u^\infty|^2}{2}t\right)}$ as $|x|\to\infty$ for $t>0$.

The second idea is about controlling the propagation of mass densities~$\rho^\eps$ and linear momenta~$J^\eps$.
The rigorous justification of the convergence for $(\rho^\eps,J^\eps)$ towarding $(\rho,\rho{u})$ relies on the suitable convergence of the initial data. The main approach for this convergence consists of the conservation laws and the so-called \textit{modulated energy functional}
\begin{align}\label{0302he}
H^{\eps}(t)=\frac{1}{2}\int_{\Omega}\left|(\eps\nabla-iu)\psi^{\eps}\right|^2dx+\frac{1}{2}\int_{\Omega}|\rho^{\eps}-\rho|^2dx,
\end{align}
where the concept of establishing (\ref{0302he})
comes from the Wigner transform (see \cite{z022,z021}) and the energy functional of $\psi^\eps$.
On the other hand, an interesting viewpoint is that using the Madelung transformation~$\psi^\eps (x,t)=\sqrt{\rho^\eps (x,t)}e^{\frac{iS^\eps (x,t)}{\eps}}$, we can rewrite (\ref{0302he}) as $H^{\eps}(t)=\frac{1}{2}\int_{\Omega}\Big(\eps^2\left|\nabla\sqrt{\rho^\eps}\right|^2+\frac{1}{\rho^\eps}|J^\eps-\rho^\eps{u}|^2+|\rho^{\eps}-\rho|^2\Big)dx$.
 The authors in \cite{lz05} showed that
$H^{\eps}(t)$ has some similar behaviors to that of a Lyapunov functional satisfying a Gronwall-type growth estimate,
which gives us an intuition to study the convergence of the differences $|\rho^\eps-\rho|$
and $|J^\eps-\rho{u}|$ 
 under the appropriate assumptions on the initial data $H^\eps(0)$.

 The modulated energy functional \eqref{0302he} is particularly a powerful tool on studying the semi-classical limits of GP models.
Later on, Lin and Zhang studied the semi-classical limit of a coupled system of GP equations in a bounded domain in $\mathbb{R}^2$ without the rotating fields and trap potentials (see \cite{lz06}), while Lee and Lin studied the semi-classical limit of the same system as in \cite{lz06} with the rotating fields and trap potentials in a bounded domain (see \cite{ll08}). In \cite{lz06} and \cite{ll08}, analogous to the techniques in \cite{lz05}, the authors
established the corresponding modulated energy functionals and controlled the propagation of the total mass densities
and total linear momenta for binary superfluid.
Some related works using the concept of modulated energy functional can be also found in \cite{cr09, lll16, lw10, lw12} and references therein.

As for \cite{ll08}, we shall consider the model \eqref{trap}--(\ref{0219-7}) with different scales of the rotating field $\eps^\eta A$. This leads to a variety of phenomena that how rotation affects the compressibility of rotating two-component condensates. When $\eta$ is taken into consideration, i.e. $\eta>0$, we can remove the rotating field $A(t, x)$ from the limit compressible Euler equation; i.e., the limit system is given by
\begin{equation*}
\left\{
\begin{array}{ll}
\partial_t \rho+\text{div}(\rho u)=0,\\
\partial_t u+(u\cdot\nabla)u+\nabla\rho=-\nabla V,&\text{for}\ x\in\Omega,\ t\in[0,T_* ),\\
(\rho, u)(0, x)=(\rho_0 (x), u_0 (x)), &\text{for}\ x\in\Omega.
\end{array}
\right.
\end{equation*}
Therefore, to observe the effect of rotating field on the system, $\eta$ must be zero. On the other hand, although the rotating field $A(t,x)$ brings us the richness of phenomena, it also creates much more challenge for our problem. In order to deal with the behavior of superfluid at infinity, inspired by \cite{lz05}, we start by creating a new \textit{Galilean transformation} $\mathcal{G}_{u^{\eps,\eta,\infty}}$ introduced in Section~\ref{0920-sec-18} (cf. \eqref{0302ga})
 for $\eqref{trap}$:
\begin{align*}
&\mathcal{G}_{u^{\eps,\eta,\infty}}\left[\left(
\begin{array}{ccc}
\psi^{\eps}_1 (t,x) \\
\psi^{\eps}_2 (t,x)
 \end{array}
\right)\right]\notag\\[-0.7em]
\\[-0.7em]
&\qquad:=
\left(
\begin{array}{ccc}
\psi^\eps_1 (t,x+u^{\eps,\eta,\infty}t)e^{-\frac{i}{\eps}\left\{U^\infty\cdot x+\left[{1\over2}(|U^\infty|^2 -\eps^{2\eta}|A^\infty|^2 )-V^\infty-(1+(\gamma-1){{a}_2})\right]t\right\}} \\
\psi^\eps_2 (t,x+u^{\eps,\eta,\infty}t)e^{-\frac{i}{\eps}\left\{U^\infty\cdot x+\left[{1\over2}(|U^\infty|^2 -\eps^{2\eta}|A^\infty|^2 )-V^\infty-(1+(\gamma-1){{a}_1})\right]t\right\}}
 \end{array}
\right),\notag
\end{align*}
where
\begin{align}\label{cc0922}
u^{\eps,\eta,\infty}=U^\infty-\eps^{\eta}A^\infty.
\end{align}
By \eqref{0302a} and \eqref{0302ga},
 we then reduce the equations of $\psi^\eps_1$ and $\psi^\eps_2$ in \eqref{trap} into the system in terms of $\phi^\eps_1$ and $\phi^\eps_2$ (which are defined in \eqref{0302ga}). In particular, the resulting equations of $\phi^\eps_1$ and $\phi^\eps_2$ are crucial in our study because
their initial data~$\phi^{\eps}_k (0,x)=\psi^{\eps}_{k,0}(x)\exp\left(-\frac{i}{\eps}U^\infty\cdot x\right)\to\sqrt{a_k}$
as $|x|\to\infty$,
which means that the superfluid has no velocity effect at infinity.
Moreover, because of \eqref{0219-2}, for each $t>0$
we have $\widetilde{A}(t,x)-A^\infty\in{H}^3 (\Omega; \mathbb{R}^2)
\cap\mathcal{C}^\infty(\overline{\Omega}; \mathbb{R}^2)$
which is useful for dealing with the operator~$\Delta^{\epsilon, \eta}_{\widetilde{A}-A^\infty}$
in the whole domain~$\Omega$.
As a consequence, we can establish the corresponding conservation laws for energy, mass and linear momentum in Lemma~\ref{laws}.
We stress that the modified Galilean transformation and Madelung transformation play a crucial role in deriving the following compressible Euler equation corresponding to the limit of hydrodynamic equations of \eqref{trap} with $\eta=0$:
\begin{equation}\label{curlfield}
\left\{
\begin{array}{ll}
\partial_t \rho+\text{div}(\rho u)=0,\\
\partial_t u+(u\cdot\nabla)u+\text{curl}A\cdot u^{\bot}+\nabla\rho=-\partial_t A-\nabla V,&\text{for}\ x\in\Omega,\ t\in[0,T_* ),\\
(\rho, u)(0, x)=(\rho_0 (x), u_0 (x)), &\text{for}\ x\in\Omega,
\end{array}
\right.
\end{equation}
with the slip boundary condition
\begin{equation}\label{sbc}
\frac{\partial u}{\partial\vec{n}}\Big|_{\partial\Omega}=0,\ \ \text{and}\ \ \rho(t,x)\to1,\ \ u(t,x)\to u^\infty\ \ \text{as}\ \ |x|\to\infty,
\end{equation}
where
\begin{equation}\label{uinfitya}
u^\infty =U^\infty-A^\infty,
\end{equation}
i.e., (\ref{cc0922}) with $\eta=0$.

It is well known that the Wigner measure is a standard tool to study the semi-classical limit problem for single GP equation, especially for dealing the problem in the whole space (see \cite{z08}). However, for general coupled system of GP equations, the Wigner measure would be much more sophisticated and difficult to apply. Fortunately, Brenier introduced the method of modulated energy (functional) in 2000 (see \cite{b00}) which has since become a powerful tool in studying the singular limit problems of nonlinear PDEs.
Motivated by \cite{b00,ll08,lz05}, we may construct a modulated energy functional
\begin{align}\label{Hfunct}
H(t):=&\displaystyle {1\over2}\int_\Omega \sum^{2}_{k=1}\left|(\nabla^{\eps,\eta}_A -iu)\psi^\eps_k \right|^2 dx+\int_\Omega \left[{1\over2}(\rho^\eps_1 +\rho^\eps_2 -\rho)^2 +(\gamma-1)\rho^\eps_1 \rho^\eps_2 \right]dx\\
=&{{\eps^2}\over2}\int_{\Omega}\sum^2_{k=1}\left|\nabla\sqrt{\rho^\eps_k}\right|^2 dx+{1\over2}\int_\Omega \sum^{2}_{k=1}\frac{1}{\rho^\eps_k}\left|J^\eps_k -\rho^\eps_k u\right|^2 dx+\int_\Omega \left[{1\over2}(\rho^\eps_1 +\rho^\eps_2 -\rho)^2 +(\gamma-1)\rho^\eps_1 \rho^\eps_2 \right]dx,\notag
\end{align}
where $\rho^\eps_k$ (respectively, $J^\eps_k$) is the mass density (respectively, linear momenta) of $\psi^\eps_k$:
\begin{align}\label{0920-2018-5}
\rho^\eps_k:=|\psi^\eps_k|^2,\,\,\mathrm{and}\,\,J^\eps_k =(J^\eps_{k,1}, J^\eps_{k,2})\,\,\mathrm{with}\,\,
J^\eps_{k,j}:=\Im(\overline{\psi^\eps_k}\partial^{\eps,\eta}_{A_j}\psi^\eps_k ).
\end{align}
It can be viewed as a defect measure in studying weakly convergent sequences of solutions.
As mentioned previously, to see the effect of the rotating field, we shall set $\eta=0$.
In our technical analysis, we need the following assumptions on initial data with respect to $\eps$:
\begin{itemize}
\item[\textbf{(A1)}] $\displaystyle\eps^2\int_{\Omega}\left|\nabla\sqrt{\rho^\eps_{k,0}(x)}\right|^2dx\to0$ as $\eps\to0$, for $k=1, 2$;

\item[\textbf{(A2)}] Both $\Big(\big(\nabla S^\eps_{k,0} (x)-A_0 (x)\big)-u_0 (x)\Big)\sqrt{\rho^\eps_{k,0}(x)}$ and $\rho^\eps_{1,0}(x)+\rho^\eps_{2,0}(x)-\rho_0 (x)$ converge to zero in $L^2 (\Omega)$ as $\eps\to0$;

\item[\textbf{(A3)}] (\ref{0302a}) holds and $\displaystyle\int_{\Omega}\rho^\eps_{1,0}(x)\rho^\eps_{2,0}(x)dx\to0$ as $\eps\to0$.
\end{itemize}
Assumptions (A1)--(A3) assert that $H(0)\to0$ as $\eps\to0+$. Moreover, by
using the modified Galilean transformation \eqref{0302ga}, we can prove that $H(t)$ satisfies a Gronwall-type growth estimate, which implies $H(t)\to0$ as $\eps\to0+$ for $t$ in an interval of considerations. 
The main result is stated as follows.

\begin{theorem}\label{main}
Assume $\gamma\geq1$ and (\ref{0225-1})--(\ref{0219-5}).
For $\eta=0$, let $(\psi^\epsilon_1 (t,x), \psi^\epsilon_2 (t,x) )$ be the unique solution of \eqref{trap} with the Neumann boundary conditions \eqref{0219-7}, and $(\rho(t,x),u(t,x))$
be the solution of \eqref{curlfield}--\eqref{uinfitya}, where $0<\rho_0\in{H}^3(\Omega;\mathbb{R})$
and $u_0\in{H}^3(\Omega;\mathbb{R}^2)$. Then, under the assumptions (A1)--(A3),
 there exists a constant $T_* >0$ such that for any $T<T_*$
\begin{equation}\label{converge}
\left\{
\begin{array}{l}
(J^\epsilon_1 +J^\epsilon_2 -\rho u)(t,x)\longrightarrow0\ \ \text{in}\ \ L^\infty\left([0,T],L^1_{\text{loc}}(\Omega)\right),\\
(\rho^\epsilon_1 +\rho^\epsilon_2 -\rho)(t,x)\longrightarrow0\ \ \text{in}\ \ L^\infty\left([0,T], L^2 (\Omega)\right),
\end{array}
\right.
\end{equation}
as $\eps\to0$, where $\rho_k^{\eps}$ and
 $J^\epsilon_k =\Im(\overline{\psi^\epsilon_k}\nabla^{\epsilon,\eta}_A \psi^\epsilon_k )$ are defined in (\ref{0920-2018-5}).
Moreover, if $\gamma>1$, we have
\begin{align}\label{rho1124}
\sup_{t\in[0,T]}\int_{\Omega}\rho^\epsilon_1(t,x)\rho^\epsilon_2(t,x)\to0,\quad{as}\,\,\eps\to0.
\end{align}
\end{theorem}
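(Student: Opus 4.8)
The plan is to run the modulated--energy scheme of \cite{lz05,ll08}, using the new Galilean transformation $\mathcal{G}_{u^{\eps,\eta,\infty}}$ to handle the behaviour at spatial infinity. First I would reduce \eqref{trap} (with $\eta=0$) to the system for $\phi^\eps_k=\mathcal{G}_{u^{\eps,\eta,\infty}}[\psi^\eps_k]$ defined in \eqref{0302ga}: this is again a coupled GP system, now with the shifted rotating field $\widetilde A(t,x)=A(t,x+u^{\eps,\eta,\infty}t)$ and shifted potential $\widetilde V$, whose initial data satisfy $\phi^\eps_k(0,\cdot)-\sqrt{a_k}\in H^4(\Omega)$ and $\phi^\eps_k(0,x)\to\sqrt{a_k}$ as $|x|\to\infty$. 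By \eqref{schrod-1123-1}--\eqref{schrod-1123-2} together with \eqref{0219-2}--\eqref{0219-5}, the corresponding decay propagates for $t>0$; since $\rho^\eps_k=|\psi^\eps_k|^2$ and $J^\eps_k$ are unchanged by $\mathcal{G}_{u^{\eps,\eta,\infty}}$, and since $(\nabla^{\eps,\eta}_A-iu)\psi^\eps_k\to 0$ as $|x|\to\infty$ because of \eqref{uinfitya} and \eqref{sbc}, this decay makes $H(t)$ in \eqref{Hfunct} finite and, more importantly, makes every integration by parts below free of boundary contributions at infinity. Applying the Madelung transformation $\phi^\eps_k=\sqrt{\rho^\eps_k}\,e^{iS^\eps_k/\eps}$ and separating real and imaginary parts converts the system into a quantum hydrodynamic system: a continuity equation $\partial_t\rho^\eps_k+\div J^\eps_k=0$ and a momentum equation for $J^\eps_k$ carrying the quantum pressure $\tfrac{\eps^2}{2}\rho^\eps_k\nabla(\Delta\sqrt{\rho^\eps_k}/\sqrt{\rho^\eps_k})$, the Lorentz-type term generated by $\widetilde A$, and the nonlinear pressure from the cubic terms. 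In parallel, Lemma~\ref{laws} supplies conservation of mass $\int_\Omega(\rho^\eps_k-a_k)$, of the total energy (which bounds $\eps^2\int_\Omega|\nabla\sqrt{\rho^\eps_k}|^2$ and $\int_\Omega\tfrac1{\rho^\eps_k}|J^\eps_k|^2$ uniformly in $t$), and the linear-momentum identity.

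For the initial time, since $\gamma\ge 1$ every term of \eqref{Hfunct} is nonnegative, and the three summands of $H(0)$ are exactly controlled by (A1), (A2) and (A3) respectively; hence $H(0)\to 0$ as $\eps\to 0$.

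The core is a Gronwall-type differential inequality for $H(t)$. I would differentiate $H(t)$ in time: in the kinetic part $\tfrac12\int_\Omega\sum_k|(\nabla^{\eps,\eta}_A-iu)\psi^\eps_k|^2$ substitute $\partial_t\psi^\eps_k$ (equivalently $\partial_t\rho^\eps_k$ and $\partial_t J^\eps_k$) from the hydrodynamic system, and in the potential part $\int_\Omega[\tfrac12(\rho^\eps_1+\rho^\eps_2-\rho)^2+(\gamma-1)\rho^\eps_1\rho^\eps_2]$ use the continuity equations together with the limit Euler system \eqref{curlfield}--\eqref{uinfitya} to replace $\partial_t\rho$ and $\partial_t u$. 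After integration by parts -- all boundary terms on $\partial\Omega$ vanishing because of the Neumann condition \eqref{0219-7} on $\psi^\eps_k$, the condition $A\cdot\vec n=0$ in \eqref{0219-4}, and the slip condition $\partial u/\partial\vec n|_{\partial\Omega}=0$ in \eqref{sbc}, and the terms at infinity vanishing by the decay from the reduction step -- one checks that every remaining term is either (i) bounded by $C\,H(t)$, where $C$ depends only on $\sup_{[0,T]}$ of the $H^3$-type norms of $(\rho,u)$ and of $A,V$ on an interval $[0,T]\subset[0,T_*)$ (here $T_*$ is the existence time of the $H^3$ solution of \eqref{curlfield}), or (ii) of size $O(\eps)$ or $O(\eps^2)$ arising from the quantum-pressure terms, hence $o(1)$ uniformly on $[0,T]$ by the energy bound. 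This yields $\tfrac{d}{dt}H(t)\le C\,H(t)+o(1)$. \emph{This is the step I expect to be the main obstacle}: switching on the rotating field produces extra Lorentz-type cross terms when $\partial_t$ hits the kinetic part of $H$, and these must be matched exactly against the $\curl A\cdot u^{\bot}$ and $-\partial_t A$ contributions supplied by \eqref{curlfield}; verifying this cancellation in an unbounded domain, while simultaneously showing that the quantum-pressure remainder is genuinely lower order, is the delicate bookkeeping, and it is also where the precise form of $\mathcal{G}_{u^{\eps,\eta,\infty}}$ in \eqref{0302ga} (rather than the naive Galilean shift) is needed so that the constant velocity $u^\infty=U^\infty-A^\infty$ appears consistently in the limit.

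Finally, Gronwall's inequality together with $H(0)\to0$ gives $\sup_{t\in[0,T]}H(t)\le (H(0)+o(1))e^{CT}\to 0$ as $\eps\to 0$, for every $T<T_*$. The convergence $(\rho^\eps_1+\rho^\eps_2-\rho)\to0$ in $L^\infty([0,T],L^2(\Omega))$ is then immediate from $\tfrac12\|\rho^\eps_1+\rho^\eps_2-\rho\|_{L^2}^2\le H(t)$. For the momenta, write $J^\eps_1+J^\eps_2-\rho u=\sum_k(J^\eps_k-\rho^\eps_k u)+(\rho^\eps_1+\rho^\eps_2-\rho)u$; on any compact $K\subset\Omega$, Cauchy--Schwarz gives $\int_K|J^\eps_k-\rho^\eps_k u|\le\big(\int_K\rho^\eps_k\big)^{1/2}\big(\int_K\tfrac1{\rho^\eps_k}|J^\eps_k-\rho^\eps_k u|^2\big)^{1/2}\le C_K\,(2H(t))^{1/2}$, where $\int_K\rho^\eps_k$ is bounded uniformly in $\eps$ and $t$ by the a priori bounds of the reduction step, while $\|(\rho^\eps_1+\rho^\eps_2-\rho)u\|_{L^1(K)}\le\|\rho^\eps_1+\rho^\eps_2-\rho\|_{L^2(K)}\|u\|_{L^2(K)}$; both tend to $0$ uniformly on $[0,T]$, which is the claimed $L^\infty([0,T],L^1_{\mathrm{loc}}(\Omega))$ convergence. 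When $\gamma>1$, the extra term in \eqref{Hfunct} gives $(\gamma-1)\int_\Omega\rho^\eps_1\rho^\eps_2\le H(t)$, so $\sup_{t\in[0,T]}\int_\Omega\rho^\eps_1\rho^\eps_2\le(\gamma-1)^{-1}\sup_{t\in[0,T]}H(t)\to0$, establishing \eqref{rho1124}.
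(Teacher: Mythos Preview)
Your plan is essentially the paper's own approach: set up the modulated energy $H(t)$ of \eqref{Hfunct}, differentiate it in time, use the conservation identities of Lemma~\ref{laws} together with the limit Euler system \eqref{curlfield} to obtain $\tfrac{d}{dt}H(t)\le C\,H(t)+O(\eps)$, apply Gronwall, and read off the three convergences from the three nonnegative pieces of $H$. Your identification of the key cancellation---that the Lorentz-type contributions from $A$ must match the $\curl A\cdot u^{\bot}$ and $-\partial_t A$ terms in \eqref{curlfield}---is exactly the point of the computation leading to the clean identity \eqref{dH}.

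There is one technical difference worth flagging. You propose to justify the computation on the unbounded domain by appealing directly to the decay of $\phi^\eps_k-\sqrt{a_k}$ inherited from the Galilean transformation. The paper instead introduces a smooth cut-off $\chi$ (equation \eqref{chifctn}) and writes $H(t)=e(t)+(\text{explicit correction terms})$ as in \eqref{Hexpansion}, where $e(t)$ is the hybrid energy \eqref{energychi-1}. The whole of Lemma~\ref{laws}(i) is then an explicit (not zero!) formula for $\tfrac{d}{dt}e(t)$, and it is the combination of this formula with the time derivatives of the correction terms that collapses to \eqref{dH}. Your phrase ``Lemma~\ref{laws} supplies conservation \dots of the total energy'' is therefore not quite right: $e(t)$ is not conserved, and the many $\chi$-dependent terms in \eqref{energy} are precisely what absorb the non-integrable pieces you would otherwise meet when integrating by parts directly in $H(t)$. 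Your decay-based route can be made to work, but the cut-off device is what the paper actually uses to keep every intermediate integral finite and to make the bookkeeping at infinity mechanical rather than delicate.

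Two smaller points: you do not need the Madelung representation $\phi^\eps_k=\sqrt{\rho^\eps_k}\,e^{iS^\eps_k/\eps}$ anywhere in the argument (and invoking it would create spurious difficulties at possible zeros of $\psi^\eps_k$); the paper works entirely with $\rho^\eps_k$, $J^\eps_k$ and the identities of Lemma~\ref{laws}. And for the $L^1_{\mathrm{loc}}$ convergence of the momenta, the paper uses the pointwise identity $J^\eps_1+J^\eps_2-\rho u=\sum_k\Im\big[(\nabla^{\eps,\eta}_A\psi^\eps_k-iu\psi^\eps_k)\overline{\psi^\eps_k}\big]+(\rho^\eps_1+\rho^\eps_2-\rho)u$ followed by Cauchy--Schwarz, which is exactly your final step written in wave-function variables.
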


\begin{remark}
When \eqref{rho1124} is satisfied, it signifies that there is no or very weak energy transfer between $\psi^\epsilon_1$ and $\psi^\epsilon_2$ . While, there is strong interaction between the two states when $\displaystyle\int_{\Omega}\rho^\epsilon_1(t,x)\rho^\epsilon_2(t,x)$ does not uniformly tend to zero in any finite time interval as $\epsilon$ goes to zero.
\end{remark}

\begin{remark}
Note that $A$ and $V$ are smooth functions.
By the standard methods (see \cite{b81,i87,lz05}), one may prove the local well-posedness
of \eqref{curlfield}-\eqref{uinfitya}, provided that $0<\rho_0\in{H}^3(\Omega;\mathbb{R})$
and $u_0\in{H}^3(\Omega;\mathbb{R}^2)$.
\end{remark}

The paper is organized as follows. The general modified Galilean transformation and conservation laws are covered in Section 2. In Section 3, after the construction of the modulated energy functional,
we give the proof of our main theorem. Finally, we give a concluding remark and pose a future project in Section 4.

\section{Modified Galilean transformation and conservation laws}
In this section we first introduce a variant of the classical Galilean transformation, then we construct the conservation laws of energy, mass and linear momentum of \eqref{trap}. Without loss of generality, we present our analysis with $a_1 =1$, $a_2 =0$ in the rest of this paper without further mention.

\subsection{Modified Galilean transformation}\label{0920-sec-18}

%
Recall that \eqref{lz} is formally invariant under the classical Galilean transformation \eqref{galilean-iv}. However, due to the existence of the rotating fields $A$ and the trap potential $V$, \eqref{galilean-iv} is NOT applicable for our system \eqref{trap}. This leads us to find a new form of Galilean transformation which we will present in \eqref{0302ga}. With this general {\it modified Galilean transformation}, we are able to deal with the rotating fields at infinity, and obtain the conservation laws of energy, mass and linear momentum.

We now define $u^{\eps,\eta,\infty}=U^\infty-\eps^\eta A^\infty$ and the general \textit{modified Galilean transformation} $\mathcal{G}_{u^{\eps,\eta,\infty}}$
 for the model $\eqref{trap}$ as follows:
 \begin{align}\label{0302ga}
&\mathcal{G}_{u^{\eps,\eta,\infty}}\left[\left(
\begin{array}{ccc}
\psi^{\eps}_1 (t,x) \\
\psi^{\eps}_2 (t,x)
 \end{array}
\right)\right]\notag\\[-0.7em]
\\[-0.7em]
&\qquad:=
\left(
\begin{array}{ccc}
\psi^\eps_1 (t,x+u^{\eps,\eta,\infty}t)e^{-\frac{i}{\eps}\left\{U^\infty\cdot x+\left[{1\over2}(|U^\infty|^2 -\eps^{2\eta}|A^\infty|^2 ){-V^\infty}-(1+(\gamma-1){a}_2)\right]t\right\}} \\
\psi^\eps_2 (t,x+u^{\eps,\eta,\infty}t)e^{-\frac{i}{\eps}\left\{U^\infty\cdot x+\left[{1\over2}(|U^\infty|^2 -\eps^{2\eta}|A^\infty|^2 ){-V^\infty}-(1+(\gamma-1){a}_1)\right]t\right\}} \\
 \end{array}
\right)\notag\\
&\qquad:=\left(
\begin{array}{c}
\phi^{\eps}_1 (t,x)\\
\phi^{\eps}_2 (t,x)
\end{array}
\right),\notag
\end{align}
where $A^\infty$ and $U^\infty$ are defined before. Moreover, with the choices of $a_1$ and $a_2$ in \eqref{0302ga}, it will give us a clear view of the connection between $\phi^\epsilon_k$ and $a_k$, $k=1, 2$. For convenience, we introduce the following notations:
\begin{equation*}
\left\{
\begin{array}{l}
M_1^{\eps} :={1\over2}\left(|U^\infty|^2 -\eps^{2\eta}|A^\infty|^2 \right)-V^\infty-\left[1+(\gamma-1){a}_2 \right]\\
M_2^{\eps} :={1\over2}\left(|U^\infty|^2 -\eps^{2\eta}|A^\infty|^2 \right)-V^\infty-\left[1+(\gamma-1){a}_1 \right].
\end{array}
\right.
\end{equation*}
Therefore, we can rewrite $\mathcal{G}_{u^{\eps,\eta,\infty}}$ in \eqref{0302ga} as follows:
\begin{equation}
\mathcal{G}_{u^{\eps,\eta,\infty}}\left[\left(
\begin{array}{ccc}
\psi^{\eps}_1 (t,x) \\
\psi^{\eps}_2 (t,x)
 \end{array}
 \right)\right]:=
\left(
\begin{array}{c}
\psi^\eps_1 (t,x+u^{\eps,\eta,\infty}t)e^{-\frac{i}{\eps}\left\{U^\infty\cdot x+M_1^{\eps} t\right\}} \\
\psi^\eps_2 (t,x+u^{\eps,\eta,\infty}t)e^{-\frac{i}{\eps}\left\{U^\infty\cdot x+M_2^{\eps} t\right\}}
 \end{array}
\right)
:=\left(
\begin{array}{c}
\phi^{\eps}_1 (t,x)\\
\phi^{\eps}_2 (t,x)
\end{array}
\right).
\end{equation}

By a direct calculation, we have that
\begin{equation}\label{phi1t}
\partial_t \phi^\eps_1 =\left[\partial_t\psi^\eps_1 +u^{\eps,\eta,\infty}\nabla\psi^\eps_1 -i\eps^{-1}M_1^{\eps} \psi^\eps_1\right]e^{-i\eps^{-1}\left\{U^\infty\cdot x+M_1^{\eps} t\right\}}\ ,
\end{equation}
\begin{equation}\label{gradientphi}
\nabla\phi^\eps_1=(\nabla\psi^\eps_1 -i\eps^{-1}U^\infty \psi^\eps_1)e^{-i\eps^{-1}\left\{U^\infty\cdot x+M_1^{\eps} t\right\}}\ ,
\end{equation}
\begin{equation}\label{laplacephi}
\Delta\phi^\eps_1 =(\Delta\psi^\eps_1-2i\eps^{-1}U^\infty \nabla\psi^\eps_1-\eps^{-2}|U^\infty|^2 \psi^\eps_1 )e^{-i\eps^{-1}\left\{U^\infty\cdot x+M_1^{\eps} t\right\}}\ .
\end{equation}
Since $\psi^\eps_1$ satisfies the $\psi^\eps_1$-equation in \eqref{trap}, it follows from \eqref{phi1t}-\eqref{laplacephi} that
\begin{multline}\label{phi1gteqn}
i\eps\partial_t \phi^\eps_1 =-{1\over2}\Delta^{\eps,\eta}_{\widetilde{A}-A^\infty}\phi^\eps_1 -\eps^\eta u^{\eps,\eta,\infty}(\widetilde{A}-A^\infty)\phi^\eps_1 \\
+(\widetilde{V}-V^\infty)\phi^\eps_1 +(|\phi^\eps_1 |^2 +|\phi^\eps_2 |^2 -1)\phi^\eps_1 +(\gamma-1)(|\phi^\eps_2|^2 -{a}_2 )\phi^\eps_1,
\end{multline}
where $\widetilde{A}=A(t,x+u^{\eps,\eta,\infty}t)$, $\widetilde{V}=V(t,x+u^{\eps,\eta,\infty}t)$.

Similarly, we can transform the original equation of $\psi^\eps_2$ in \eqref{trap} into the following:
\begin{multline}\label{phi2gteqn}
i\eps\partial_t \phi^\eps_2 =-{1\over2}\Delta^{\eps,\eta}_{\widetilde{A}-A^\infty}\phi^\eps_2 -\eps^\eta u^{\eps,\eta,\infty}(\widetilde{A}-A^\infty)\phi^\eps_2 \\
+(\widetilde{V}-V^\infty)\phi^\eps_2 +(|\phi^\eps_1 |^2 +|\phi^\eps_2 |^2 -1)\phi^\eps_2 +(\gamma-1)(|\phi^\eps_1|^2 -{a}_1 )\phi^\eps_2.
\end{multline}

Hence we obtain the {\it new} system which consists of the two equations \eqref{phi1gteqn} and \eqref{phi2gteqn} in terms of $\phi^\eps_1$ and $\phi^\eps_2$ under the modified Galilean transformation:
\begin{equation}\label{phisystm}
\left\{
\begin{array}{l}
 i\eps\partial_t \phi^\eps_1=-{1\over2}\Delta^{\eps,\eta}_{\widetilde{A}-A^\infty}\phi^\eps_1 -\eps^\eta u^{\eps,\eta,\infty}(\widetilde{A}-A^\infty)\phi^\eps_1\\
\qquad\qquad\qquad+(\widetilde{V}-V^\infty)\phi^\eps_1 +(|\phi^\eps_1 |^2 +|\phi^\eps_2 |^2 -1)\phi^\eps_1 +(\gamma-1)(|\phi^\eps_2|^2 -{a}_2 )\phi^\eps_1,\\
 i\eps\partial_t \phi^\eps_2=-{1\over2}\Delta^{\eps,\eta}_{\widetilde{A}-A^\infty}\phi^\eps_2 -\eps^\eta u^{\eps,\eta,\infty}(\widetilde{A}-A^\infty)\phi^\eps_2 \\
\qquad\qquad\qquad+(\widetilde{V}-V^\infty)\phi^\eps_2 +(|\phi^\eps_1 |^2 +|\phi^\eps_2 |^2 -1)\phi^\eps_2 +(\gamma-1)(|\phi^\eps_1|^2 -{a}_1 )\phi^\eps_2.\\
\phi^{\eps}_k (0,x)\to\sqrt{a_k}\quad\mathrm{as}\,\, \ |x|\to\infty.
\end{array}
\right.
\end{equation}
\begin{remark}\label{rk-1123}
Since $\phi^{\eps}_k (0,x)\to\sqrt{a_k}$ as $|x|\to\infty$,
under assumptions~\eqref{0225-1}, \eqref{0219-2}--\eqref{0219-3} and \eqref{0219-5},
the standard high-order energy estimates imply
\begin{align}\label{k12-1123}
\partial_t^j\partial_x^{\alpha}\left(\phi_k^\eps (t,x)-\sqrt{a_k}\right)\in{L}^{\infty}([0,T];H^{4-2j-|\alpha|}(\Omega;\mathbb{C})),\quad{k=1,2},
\end{align}
for any $T<\infty$ and $0\leq2j+|\alpha|\leq{4}$. Along with the Galilean transformation~\eqref{0302ga},
we get (\ref{schrod-1123-1}) and (\ref{schrod-1123-2}).
\end{remark}

\subsection{Conservation laws}

 When domain is bounded, the standard energy functional of \eqref{trap} is defined as follows (see \cite{ll08}):
\begin{align}\label{eg-17-1124}
E(\psi^\eps_1, \psi^\eps_2 )
                                 &={1\over2}\int_\Omega\sum^2_{k=1}\left|\nabla^{\eps,\eta}_A \psi^\eps_k \right|^2 dx +\int_\Omega (V+1)(|\rho^\eps_1 | +|\rho^\eps_2 | ) dx\notag\\
                                 &\qquad+{1\over2}\int_\Omega (\rho^\eps_1  +\rho^\eps_2 -1)^2 +(\gamma-1)\int_\Omega \rho^\eps_1 \rho^\eps_2 dx,
\end{align}
where $\rho^\eps_k =|\psi^\eps_k |^2$, $k=1, 2$.
However, when domain is unbounded and either $a_k\neq0$ or $U^\infty\neq\vec{\boldsymbol{0}}$ holds,
(\ref{eg-17-1124}) is not well-defined, and has to be modified.

Let us define a cut-off function $\chi (x)\in\mathcal{C}^{\infty}(\mathbb{R}^2 )$ with
\begin{equation}\label{chifctn}
\chi(x)=
\left\{
\begin{array}{ll}
0&\text{for}\ \ |x|\le R,\\
1&\text{for}\ \ |x|\ge 2R,
\end{array}
\right.
\end{equation}
such that $\|\nabla\chi\|_{L^\infty (\Omega)}\le{2\over R}$, where $R$ is sufficiently large such that $\Omega^{c}\subset B_{R}(0)$.
 By using $\chi(x)$ defined above, we can define a new energy functional as below:
\begin{align}\label{energychi-1}
\mathcal{E}(\psi^\eps_1, \psi^\eps_2 )
&=\int_\Omega\sum^2_{k=1}\left[{1\over2}\left|\nabla^{\eps,\eta}_A \psi^\eps_k \right|^2 +(V+1)\rho^\eps_k\right] (1-\chi)dx+\int_\Omega\sum^2_{k=1}{1\over2}\left|(\nabla^{\eps,\eta}_A -iu^{\eps,\eta,\infty}) \psi^\eps_k \right|^2 \chi dx \notag\\
&\quad+\int_\Omega\sum^2_{k=1} (V-V^\infty )\rho^\eps_k \chi dx+\int_\Omega {1\over2}(\rho^\eps_1 +\rho^\eps_2 -1)^2dx+(\gamma-1) \int_\Omega \rho^\eps_1 \rho^\eps_2 dx.
\end{align}
Here $u^{\eps,\eta,\infty}$ is defined in (\ref{cc0922}). 

Now we present the conservation laws.
\begin{lemma}\label{laws}
Let $e(t):=\mathcal{E}(\psi^\eps_1, \psi^\eps_2 )$, $J^\eps_{k,j}:=\Im(\overline{\psi^\eps_k}\partial^{\eps,\eta}_{A_j}\psi^\eps_k )$ and $J^\eps_k =(J^\eps_{k,1}, J^\eps_{k,2})$, then the following holds:
\vskip 0.2cm
\begin{itemize}
\item[(i)] Conservation of energy:
{\small\allowdisplaybreaks
\begin{align}\label{energy}
\frac{d}{dt}e(t)&=-\int_\Omega \sum^2_{k,j,l=1}\Re\left[(\partial^{\eps,\eta}_{A_j}\psi^\eps_k \overline{\partial^{\eps,\eta}_{A_l}\psi^\eps_k})u^{\eps,\eta,\infty}_l \partial_j \chi\right]+{{\eps^2}\over4}\int_\Omega \sum^2_{k=1}\Delta(u^{\eps,\eta,\infty}\cdot\nabla\chi)\rho^\eps_k\nnn\\
&\quad+\int_\Omega \sum^2_{k=1}\left[{1\over2}|u^{\eps,\eta,\infty}|^2 -1-V^\infty\right]J^\eps_k \cdot\nabla\chi-\eps^\eta \partial_t A\cdot J^\eps_k (1-\chi)\nnn\\
&\quad-\int_\Omega \sum^2_{k=1}\eps^\eta\left[\partial_t A +\mathrm{curl}A\cdot(u^{\eps,\eta,\infty})^\bot \right]\cdot(J^\eps_k -u^{\eps,\eta,\infty}\rho^\eps_k )\chi\nnn\\
&\quad+\int_\Omega \sum^2_{k=1}(\partial_t V)\rho^\eps_k +\int_\Omega \sum^2_{k=1}(\nabla V\cdot u^{\eps,\eta,\infty})\rho^\eps_k \chi\nnn\\
&\quad-\int_\Omega \left\{{1\over2}\left[(\rho^\eps_1 +\rho^\eps_2)^2 -1\right] +(\gamma-1)\rho^\eps_1 \rho^\eps_2\right\}\cdot(u^{\eps,\eta,\infty} \cdot\nabla\chi)\ ;
\end{align}}

\item[(ii)] Conservation of mass:
\begin{equation}\label{mass}
\partial_t \rho^\eps_k +\mathrm{div}J^\eps_k =0, \qquad \text{for}\ \ x\in\Omega,\ t>0,\ k=1, 2;
\end{equation}

\item[(iii)] Conservation of linear momentum:
\begin{multline}\label{momentum}
\sum^2_{k=1}\partial_t J^\eps_{k, j}+\sum^2_{k,l=1}\partial_l \left[\Re(\overline{\partial^{\eps,\eta}_{A_l}\psi^\eps_k}\cdot\partial^{\eps,\eta}_{A_j}\psi^\eps_k)\right]-{{\eps^2}\over4}\sum^2_{k=1}\partial_j (\Delta\rho^\eps_k )
+\sum^2_{k=1}(\partial_j V)\rho^\eps_k\\ +{1\over2}\sum^2_{k=1}\partial_j |\rho^\eps_k |^2 +\gamma\partial_j (\rho^\eps_1 \rho^\eps_2 )+\sum^2_{k=1} \eps^\eta(\partial_t A_j )\rho^\eps_k
+\sum^2_{k=1}(-1)^{j} \eps^\eta\mathrm{curl} A\cdot J^\eps_{k,j^*} =0,
\end{multline}
where $j^* =2$ if $j=1$, and $j^* =1$ if $j=2$.
\end{itemize}
\end{lemma}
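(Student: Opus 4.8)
The plan is to prove the three identities in increasing order of difficulty. Throughout, the regularity \eqref{schrod-1123-1}--\eqref{schrod-1123-2} (equivalently \eqref{k12-1123}) is what licenses differentiation under the integral sign, every integration by parts, and the vanishing of the boundary terms at spatial infinity; the Neumann condition \eqref{0219-7} together with $A\cdot\vec{n}=0$ on $\partial\Omega$ (from \eqref{0219-4}) kills the boundary terms on $\partial\Omega$, and $\div A=0$ (from \eqref{0219-3}) is used repeatedly to collapse the magnetic cross terms. For (ii) I would multiply the $\psi^\eps_k$-equation of \eqref{trap} by $\overline{\psi^\eps_k}$ and take imaginary parts: since $V$, $|\psi^\eps_1|^2$ and $|\psi^\eps_2|^2$ are real the potential and cubic terms drop out, and the current identity $\Im\big(\overline{\psi^\eps_k}\,\Delta^{\eps,\eta}_A\psi^\eps_k\big)=\eps\,\div\big(\Im(\overline{\psi^\eps_k}\,\nabla^{\eps,\eta}_A\psi^\eps_k)\big)=\eps\,\div J^\eps_k$, immediate from $\Delta^{\eps,\eta}_A=\nabla^{\eps,\eta}_A\cdot\nabla^{\eps,\eta}_A$, turns $\tfrac{\eps}{2}\partial_t\rho^\eps_k=\Im(i\eps\,\overline{\psi^\eps_k}\,\partial_t\psi^\eps_k)$ into \eqref{mass}.

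For (iii) I would differentiate $J^\eps_{k,j}=\Im(\overline{\psi^\eps_k}\,\partial^{\eps,\eta}_{A_j}\psi^\eps_k)$ in $t$. The time derivative of the $-\eps^\eta iA_j$ factor in $\partial^{\eps,\eta}_{A_j}$ produces the term $\eps^\eta(\partial_t A_j)\rho^\eps_k$, and into the remaining part $\Im(\partial_t\overline{\psi^\eps_k}\,\partial^{\eps,\eta}_{A_j}\psi^\eps_k+\overline{\psi^\eps_k}\,\partial^{\eps,\eta}_{A_j}\partial_t\psi^\eps_k)$ I substitute the $\psi^\eps_k$-equation and its conjugate. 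Via the standard manipulation (test the equation against $\partial^{\eps,\eta}_{A_j}\overline{\psi^\eps_k}$, take real parts, symmetrize), the $\Delta^{\eps,\eta}_A$-contribution is rearranged into the divergence-of-stress term $\partial_l\Re(\overline{\partial^{\eps,\eta}_{A_l}\psi^\eps_k}\,\partial^{\eps,\eta}_{A_j}\psi^\eps_k)$, the quantum-pressure term $\tfrac{\eps^2}{4}\partial_j\Delta\rho^\eps_k$, and an antisymmetric magnetic piece proportional to $\eps^\eta(\partial_j A_l-\partial_l A_j)J^\eps_{k,l}$ which, in two dimensions and after using $\div A=0$, becomes the Coriolis-type term $(-1)^j\eps^\eta\,\curl A\cdot J^\eps_{k,j^*}$; the potential yields $(\partial_j V)\rho^\eps_k$, and the cubic nonlinearity yields $\tfrac12\partial_j|\rho^\eps_k|^2$ for each $k$ and, after summing $k=1,2$, the cross term $\gamma\,\partial_j(\rho^\eps_1\rho^\eps_2)$. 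Summing over $k$ gives \eqref{momentum}.

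For (i), the bulk of the work, I would differentiate $e(t)=\mathcal{E}(\psi^\eps_1,\psi^\eps_2)$ from \eqref{energychi-1} term by term; the time dependence enters only through $\psi^\eps_k(t,\cdot)$ and, explicitly, through $A(t,\cdot)$ and $V(t,\cdot)$, since $u^{\eps,\eta,\infty}$ is a constant. A convenient device is to pass first to $(\phi^\eps_1,\phi^\eps_2)$ via \eqref{0302ga}, so that on $\{\chi\equiv1\}$ the integrand of $\mathcal{E}$ is exactly the finite energy density of the system \eqref{phisystm}, whose Hamiltonian structure is transparent, while near $\mathcal{K}$ (where $\chi\equiv0$) it is the plain energy density of \eqref{trap}, with $\chi$ interpolating only on $R\le|x|\le2R$. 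Inserting the PDEs and integrating by parts, the part of the naive Hamiltonian density that would diverge at infinity cancels against the $u^{\eps,\eta,\infty}$-shift built into \eqref{energychi-1}, and what survives is exactly the right-hand side of \eqref{energy}: the $\nabla\chi$-localized fluxes (the stress term $\Re[(\partial^{\eps,\eta}_{A_j}\psi^\eps_k\,\overline{\partial^{\eps,\eta}_{A_l}\psi^\eps_k})u^{\eps,\eta,\infty}_l\partial_j\chi]$; the quantum-pressure term $\tfrac{\eps^2}{4}\Delta(u^{\eps,\eta,\infty}\cdot\nabla\chi)\rho^\eps_k$, from integrating the Bohm potential by parts twice against $\chi$-derivatives; the kinetic/potential flux $[\tfrac12|u^{\eps,\eta,\infty}|^2-1-V^\infty]J^\eps_k\cdot\nabla\chi$; and the nonlinear flux $-\{\tfrac12[(\rho^\eps_1+\rho^\eps_2)^2-1]+(\gamma-1)\rho^\eps_1\rho^\eps_2\}(u^{\eps,\eta,\infty}\cdot\nabla\chi)$), together with the contributions of the explicit time dependence of $A$ and $V$, namely $-\eps^\eta\partial_t A\cdot J^\eps_k(1-\chi)$, $-\eps^\eta[\partial_t A+\curl A\cdot(u^{\eps,\eta,\infty})^\bot]\cdot(J^\eps_k-u^{\eps,\eta,\infty}\rho^\eps_k)\chi$, $(\partial_t V)\rho^\eps_k$, and the convective flux $(\nabla V\cdot u^{\eps,\eta,\infty})\rho^\eps_k\chi$ (once more, $\div A=0$ is what reshapes the magnetic part into the $\curl A\cdot(u^{\eps,\eta,\infty})^\bot$ form).

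I expect step (i) to be the main obstacle: the difficulty is not any single computation but the bookkeeping — verifying that the renormalization in \eqref{energychi-1} cancels the infinite part of the naive Hamiltonian up to the localized $\nabla\chi$ fluxes, and identifying the antisymmetric magnetic contributions correctly as the Coriolis term $\curl A\cdot(u^{\eps,\eta,\infty})^\bot$. Once the groupings (plain energy near $\mathcal{K}$, renormalized energy at infinity, $\chi$-interpolation in the annulus) are fixed and $\div A=0$, $A\cdot\vec{n}=0$ on $\partial\Omega$, the Neumann condition, and the decay from \eqref{schrod-1123-1}--\eqref{schrod-1123-2} are used systematically, \eqref{energy} follows by a long but routine calculation.
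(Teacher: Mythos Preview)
Your proposal is correct and follows essentially the same route as the paper: part (ii) is the standard mass identity, part (iii) is obtained exactly as you describe (the paper phrases it as multiplying the conjugate equation by $\partial^{\eps,\eta}_{A_j}\psi^\eps_k$, applying $\partial^{\eps,\eta}_{A_j}$ to the equation and multiplying by $\overline{\psi^\eps_k}$, then summing and taking real parts, invoking a commutator identity from \cite{ll08} for the $\Delta^{\eps,\eta}_A$ piece), and for part (i) the paper splits $e(t)$ into the $(1-\chi)$ part treated directly on $\psi^\eps_k$ and the $\chi$ part treated on the Galilean-transformed functions $\phi^\eps_k$ with cutoff $\chi(x+u^{\eps,\eta,\infty}t)$, exactly the device you propose. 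One small remark: the passage from the antisymmetric magnetic piece $\eps^\eta(\partial_j A_l-\partial_l A_j)J^\eps_{k,l}$ to the Coriolis form $(-1)^j\eps^\eta\,\curl A\cdot J^\eps_{k,j^*}$ is purely the $2$D identity for the antisymmetric part of $\nabla A$ and does not require $\div A=0$; the divergence-free condition is used elsewhere (to move $A\cdot\nabla$ across by parts in the energy computation), not for this step.
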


\begin{proof}
(i) For convenience, we rewrite the equations of $\psi^\epsilon_1$, $\psi^\epsilon_2$ in \eqref{trap} into the following system:
\begin{equation}\label{transftrap}
\left\{
\begin{array}{l}
i\epsilon\partial_t \psi^\epsilon_1 =-{1\over2}\Delta^{\epsilon,\eta}_{A}\psi^\epsilon_1 +V\psi^\eps_1 +\psi^\epsilon_1 +(|\psi^\epsilon_1 |^2 +|\psi^\epsilon_2 |^2 -1)\psi^\epsilon_1 +(\gamma-1)|\psi^\epsilon_2 |^2 \psi^\epsilon_1,\\
i\epsilon\partial_t \psi^\epsilon_2 =-{1\over2}\Delta^{\epsilon,\eta}_{A}\psi^\epsilon_2 +V\psi^\eps_2 +\psi^\epsilon_2 +(|\psi^\epsilon_1 |^2 +|\psi^\epsilon_2 |^2 -1)\psi^\epsilon_2 +(\gamma-1)|\psi^\epsilon_1 |^2 \psi^\epsilon_2.
\end{array}
\right.
\end{equation}

We proceed our proof of part (i) into several steps.

\vskip 0.3cm

{\bf Step 1.} Multiply $\left[1-\chi(x)\right]\partial_t \psi^\eps_1$ to the complex conjugate of the $\psi^\epsilon_1$-equation in \eqref{transftrap} and
$\left[1-\chi(x)\right]\partial_t \overline{\psi^\eps_1}$ to the $\psi^\epsilon_1$-equation in \eqref{transftrap}. Then, summing up the two resulting equations and choosing the real part, we obtain that
{\allowdisplaybreaks
\begin{align}\label{Aoperator}
&-{1\over2}\int_\Omega (\Delta^{\eps,\eta}_A \psi^\eps_1 \cdot\partial_t \overline{\psi^\eps_1}+\overline{\Delta^{\eps,\eta}_A \psi^\eps_1}\cdot\partial_t \psi^\eps_1)(1-\chi)\nnn\\
&\qquad={1\over2}\int_\Omega \eps^2 (\partial_t |\nabla\psi^\eps_1 |^2)(1-\chi)-{1\over2}\int_\Omega \eps^2(\nabla\psi^\eps_1 \partial_t \overline{\psi^\eps_1}+\nabla\overline{\psi^\eps_1}\partial_t \psi^\eps_1 )\nabla\chi\nnn\\
&\qquad\qquad+\eps^{1+\eta}\int_\Omega [(iA\cdot\nabla\psi^\eps_1 )\partial_t \overline{\psi^\eps_1 }-(iA\cdot\nabla\overline{\psi^\eps_1})\partial_t \psi^\eps_1](1-\chi)
+\eps^{2\eta}\int_\Omega |A|^2 \partial_t \rho^\eps_1 \cdot (1-\chi)\nnn\\
&\qquad={1\over2}\int_\Omega \partial_t |\nabla^{\eps,\eta}_A \psi^\eps_1|^2 (1-\chi)-\eps^2\int_\Omega \Re(\nabla\psi^\eps_1 \partial_t \overline{\psi^\eps_1})\nabla\chi\nnn\\
&\qquad\qquad+{1\over2}\eps^{1+\eta}\int_\Omega iA(\nabla\psi^\eps_1\partial_t \overline{\psi^\eps_1}-\nabla\overline{\psi^\eps_1}\partial_t \psi^\eps_1 )(1-\chi)\nnn\\
&\qquad\qquad-{1\over2}\eps^{1+\eta}\int_\Omega i(\partial_t A)(\nabla\psi^\eps_1 \overline{\psi^\eps_1}-\nabla\overline{\psi^\eps_1}\psi^\eps_1)(1-\chi)\nnn\\
&\qquad\qquad-{1\over2}\eps^{1+\eta}\int_\Omega iA[\partial_t (\nabla\psi^\eps_1)\overline{\psi^\eps_1}-\partial_t(\nabla\overline{\psi^\eps_1})\psi^\eps_1](1-\chi)\nnn\\
&\qquad\qquad-{1\over2}\eps^{2\eta}\int_\Omega (\partial_t |A|^2)|\psi^\eps_1|^2 (1-\chi),
\end{align}}
via the integration by parts and the Neumann boundary conditions \eqref{0219-7}.

Since $A$ is divergence-free, i.e. div$A$=0 for $x\in\Omega$ and $t>0$, we have
\begin{equation}\label{divA1}
(A\cdot\nabla\psi^\eps_1 )\partial_t \overline{\psi^\eps_1 }=\text{div}(A\psi^\eps_1 )\partial_t \overline{\psi^\eps_1 }.
\end{equation}
Consequently, we may use integration by parts and $A\cdot\vec{n}\Big|_{\partial\Omega}=0$ for $t>0$ to get
\begin{equation}\label{divA2}
\int_\Omega (A\cdot\nabla\psi^\eps_1 )\partial_t \overline{\psi^\eps_1 }(1-\chi)
=-\int_\Omega A\cdot\nabla(\partial_t \overline{\psi^\eps_1})\psi^\eps_1 (1-\chi)+\int_\Omega A\psi^\eps_1 \partial_t \overline{\psi^\eps_1}\nabla\chi.
\end{equation}
Similarly, we have
\begin{equation}\label{divA3}
\int_\Omega (A\cdot\nabla\overline{\psi^\eps_1} )\partial_t \psi^\eps_1 (1-\chi)=-\int_\Omega A\cdot\overline{\psi^\eps_1}\nabla(\partial_t \psi^\eps_1)(1-\chi)+\int_\Omega A\overline{\psi^\eps_1} \partial_t \psi^\eps_1\nabla\chi.
\end{equation}

Therefore, it follows from \eqref{Aoperator}-\eqref{divA3} that
\begin{align}\label{Aoperatorfull}
&-{1\over2}\int_\Omega (\Delta^{\eps,\eta}_A \psi^\eps_1 \cdot\partial_t \overline{\psi^\eps_1}+\overline{\Delta^{\eps,\eta}_A \psi^\eps_1}\cdot\partial_t \psi^\eps_1)(1-\chi)\nnn\\
&\qquad={1\over2}\int_\Omega \eps^2(\partial_t |\nabla^{\eps}_A \psi^\eps_1 |^2)(1-\chi)-\eps^2\int_\Omega \Re(\nabla\psi^\eps_1 \partial_t \overline{\psi^\eps_1})\nabla\chi\nnn\\
&\qquad\quad+\eps^\eta \int_\Omega (\partial_t A)J^\eps_1 (1-\chi)+\eps^{1+\eta}\int_\Omega A\Im[\overline{\psi^\eps_1}\partial_t \psi^\eps_1]\nabla\chi.
\end{align}

Substituting \eqref{Aoperatorfull} into \eqref{Aoperator}, we get that
\begin{align}\label{id1}
&{1\over2}\int_\Omega \eps^2(\partial_t |\nabla^{\eps,\eta}_A \psi^\eps_1 |^2)(1-\chi)-\eps^2\int_\Omega \Re(\nabla\psi^\eps_1 \partial_t \overline{\psi^\eps_1})\nabla\chi\nnn\\
&\qquad\quad+\eps^\eta\int_\Omega (\partial_t A)J^\eps_1 (1-\chi)+\eps^{1+\eta}\int_\Omega A\Im[\overline{\psi^\eps_1}\partial_t \psi^\eps_1]\nabla\chi\nnn\\
&\qquad\quad+\int_\Omega \partial_t \left[(V+1)\rho^\eps_1)\right] (1-\chi)-\int_\Omega (\partial_t V) \rho^\eps_1 (1-\chi)\nnn\\
&\qquad\quad+\int_\Omega (\rho^\eps_1 +\rho^\eps_2 -1) (\partial_t \rho^\eps_1) (1-\chi) +(\gamma-1)\int_\Omega \rho^\eps_2 (\partial_t \rho^\eps_1) (1-\chi)\nnn\\
&=0.
\end{align}

Therefore, we obtain that
{\allowdisplaybreaks
\begin{align}\label{psi1energy}
&{1\over2}\int_\Omega \left[\partial_t |\nabla^{\eps,\eta}_A \psi^\eps_1 |^2 \right](1-\chi)+\int_\Omega \partial_t \left[(V+1)\rho^\eps_1 \right](1-\chi)\nnn\\
&\qquad\quad+\int_\Omega (\rho^\eps_1 +\rho^\eps_2 -1 )(\partial_t \rho^\eps_1 )(1-\chi)+(\gamma-1)\int_\Omega \rho^\eps_2 (\partial_t \rho^\eps_1 )(1-\chi)\nnn\\
&\qquad=\eps^2\int_\Omega \Re(\nabla\psi^\eps_1 \partial_t \overline{\psi^\eps_1})\nabla\chi-\eps^\eta\int_\Omega (\partial_t A)J^\eps_1 (1-\chi)\nnn\\
&\qquad-\eps^{1+\eta}\int_\Omega A\Im[\overline{\psi^\eps_1}\partial_t \psi^\eps_1]\nabla\chi+\int_\Omega (\partial_t V)\rho^\eps_1 (1-\chi).
\end{align}
}

Similarly, for $\psi^\eps_2$ we have
\begin{align}\label{psi2energy}
&{1\over2}\int_\Omega \left[\partial_t |\nabla^{\eps,\eta}_A \psi^\eps_2 |^2 \right](1-\chi)+\int_\Omega \partial_t \left[(V+1)\rho^\eps_2 \right](1-\chi)\nnn\\
&\qquad\quad+\int_\Omega (\rho^\eps_1 +\rho^\eps_2 -1 )(\partial_t \rho^\eps_2 )(1-\chi)+(\gamma-1)\int_\Omega \rho^\eps_1 (\partial_t \rho^\eps_2 )(1-\chi)\nnn\\
&\qquad=\eps^2\int_\Omega \Re(\nabla\psi^\eps_2 \partial_t \overline{\psi^\eps_2})\nabla\chi-\eps^\eta\int_\Omega (\partial_t A)J^\eps_2 (1-\chi)\nnn\\
&\qquad-\eps^{1+\eta}\int_\Omega A\Im[\overline{\psi^\eps_2}\partial_t \psi^\eps_2]\nabla\chi+\int_\Omega (\partial_t V)\rho^\eps_2 (1-\chi).
\end{align}

Adding \eqref{psi1energy} and \eqref{psi2energy} together, we have
\begin{align}\label{energyt}
&\frac{d}{dt}\left\{{1\over2}\int_\Omega \sum^2_{k=1}|\nabla^{\eps,\eta}_A \psi^\eps_k|^2 (1-\chi)+\int_\Omega \sum^2_{k=1}(V+1)\rho^\eps_k (1-\chi)\right.\nnn\\
&\qquad\quad\left.+{1\over2}\int_\Omega (\rho^\eps_1 +\rho^\eps_2 -1)^2 (1-\chi)+(\gamma-1)\int_\Omega \rho^\eps_1 \rho^\eps_2 (1-\chi)\right\}\nnn\\
&\qquad=\int_\Omega\sum^2_{k=1}\eps^2 \Re(\nabla\psi^\eps_k \partial_t \overline{\psi^\eps_k})\nabla\chi-\int_\Omega\sum^2_{k=1}\eps^\eta(\partial_t A)J^\eps_k (1-\chi)\nnn\\
&\qquad\quad-\eps^{1+\eta}\int_\Omega\sum^2_{k=1} A\Im[\overline{\psi^\eps_k}\partial_t \psi^\eps_k]\nabla\chi+\int_\Omega\sum^2_{k=1} (\partial_t V)\rho^\eps_k (1-\chi).
\end{align}

{\bf Step 2.} Analogously to the process in Step 1, we multiply $\chi(x+u^{\eps,\eta,\infty}t)\partial_t \phi^\eps_1$ to the complex conjugate of \eqref{phi1gteqn} and $\chi(x+u^{\eps,\eta,\infty}t)\partial_t \overline{\phi^\eps_1}$ to \eqref{phi1gteqn}, respectively.
Then, by summing up those resulting equations and choosing the real part, we have that
{\allowdisplaybreaks
\begin{align}\label{energychi}
&\frac{d}{dt}\left[\sum^2_{k=1}\int_\Omega {1\over2} |\nabla^{\eps,\eta}_{\widetilde{A}-A^\infty}\phi^\eps_k (t,x)|^2 \chi(x+u^{\eps,\eta,\infty}t)+\sum^2_{k=1}\int_\Omega (\widetilde{V}-V^\infty )|\phi^\eps_k (t,x)|^2\chi(x+u^{\eps,\eta,\infty}t)\right]\nnn\\
&\qquad\quad+\frac{d}{dt}\left[\int_\Omega {1\over2}(|\phi^\eps_1|^2 +|\phi^\eps_2|^2 -1)^2 \chi(x+u^{\eps,\eta,\infty}t)
+(\gamma-1)\int_\Omega |\phi^\eps_1|^2 |\phi^\eps_2|^2 \chi(x+u^{\eps,\eta,\infty}t)\right]\nnn\\
&\qquad=:\sum^2_{k=1}\sum^{11}_{i=1}I_{i,k}+{1\over2}\int_\Omega (|\phi^\eps_1 |^2 +|\phi^\eps_2 |^2 -1)^2 \cdot\left[\nabla\chi(x+u^{\eps,\eta,\infty}t)\cdot u^{\eps,\eta,\infty}\right]\nnn\\
&\qquad\quad+(\gamma-1)\int_\Omega |\phi^\eps_1 |^2 |\phi^\eps_2 |^2 \cdot\left[\nabla\chi(x+u^{\eps,\eta,\infty}t)\cdot u^{\eps,\eta,\infty}\right],
\end{align}}
where $I_{i,k}$'s ($i=1,\cdots, 11$, $k=1,2$) are defined respectively in the following.

By applying a change of variable $y=x+u^{\eps,\eta,\infty}t$ and a direct computation, we have that
\begin{align}\label{I5k}
I_{1,k}:&={\eps^2\over2}\int_\Omega |\nabla\phi^\eps_k|^2 \cdot\left[\nabla\chi(x+u^{\eps,\eta,\infty}t)\cdot u^{\eps,\eta,\infty}\right]\nnn\\
&={1\over2}\int_{\Omega}\left|\eps\nabla\psi^\eps_k -iU^\infty \psi^\eps_k \right|^2 \cdot\left(\nabla\chi\cdot u^{\eps,\eta,\infty}\right).
\end{align}

\begin{equation}\label{I6k}
I_{2,k}:=\frac{\eps^{2\eta}}{2}\int_\Omega |\widetilde{A}-A^\infty|^2 |\phi^\eps_k |^2 \cdot\left[\nabla\chi(x+u^{\eps,\eta,\infty}t)\cdot u^{\eps,\eta,\infty}\right].
\end{equation}

\begin{align}\label{I7k}
I_{3,k}:&=-\eps^{1+\eta}\int_\Omega (\widetilde{A}-A^\infty )\Im(\nabla\phi^\eps_k \overline{\phi^\eps_k})\cdot\left[\nabla\chi(x+u^{\eps,\eta,\infty}t)\cdot u^{\eps,\eta,\infty}\right]\nnn\\
&=-\int_\Omega (A-A^\infty)\left[\eps^\eta \Im (\nabla^{\eps,\eta}_A \psi^\eps_k \overline{\psi^\eps_k})+\eps^{2\eta}A\rho^\eps_k -\eps^\eta U^\infty \rho^\eps_k \right](\nabla\chi\cdot u^{\eps,\eta,\infty})\nnn\\
&=-\eps^\eta \int_{\Omega}(A-A^\infty )(J^\eps_k -u^{\eps,\eta,\infty}\rho^\eps_k )\cdot(\nabla\chi\cdot u^{\eps,\eta,\infty})-\eps^\eta \int_{\Omega} |A-A^\infty|^2 \rho^\eps_k \cdot(\nabla\chi\cdot u^{\eps,\eta,\infty}).
\end{align}

\begin{align}\label{I8k}
I_{4,k}:&=-\eps^2\int_\Omega \Re(\nabla\phi^\eps_k\partial_t \overline{\phi^\eps_k} )\cdot\nabla\chi(x+u^{\eps,\eta,\infty}t)\nnn\\
&=-\eps^2\int_\Omega \Re\left\{\left(\nabla\psi^\eps_k -i\eps^{-1}U^\infty \psi^\eps_k\right)\cdot\nabla\chi\right.\nnn\\
&\qquad\quad\left.\cdot\left[\partial_t \overline{\psi^\eps_k}+u^{\eps,\eta,\infty}\nabla\overline{\psi^\eps_k}+i\eps^{-1}\left({1\over2}\left(|U^\infty|^2 -\eps^{2\eta}|A^\infty|^2\right)-V^\infty-1-(\gamma-1)a_{k^*}\right)\overline{\psi^\eps_k}\right] \right\},
\end{align}
where $k^*=1$ if $k=2$, $k^*=2$ if $k=1$.

\begin{align}\label{I9k}
I_{5,k}:&=-\eps^{1+\eta}\int_\Omega (\partial_t \widetilde{A})\Im(\nabla\phi^\eps_k \overline{\phi^\eps_k})\cdot\chi(x+u^{\eps,\eta,\infty}t)\nnn\\
&=-\eps^\eta\int_\Omega\sum^2_{j=1} (\partial_t A_j +u^{\eps,\eta,\infty}\cdot\nabla A_j )(J^\eps_k -u^{\eps,\eta,\infty}\rho^\eps_k)\chi\nnn\\
&\qquad-\eps^{2\eta}\int_\Omega\sum^2_{j=1} (\partial_t A_j +u^{\eps,\eta,\infty}\cdot\nabla A_j ) (A-A^\infty)\rho^\eps_k\chi.
\end{align}

\begin{align}\label{I10k}
I_{6,k}:&=\eps^{1+\eta}\int_\Omega (\widetilde{A}-A^\infty )\Im(\overline{\phi^\eps_k}\partial_t \phi^\eps_k )\nabla\chi(x+u^{\eps,\eta,\infty}t)\nnn\\
&=\eps^{1+\eta}\int_\Omega (A-A^\infty)\Im(\overline{\psi^\eps_k}\partial_t \psi^\eps_k)\nabla\chi+\eps^{\eta}\int_\Omega (A-A^\infty)J^\eps_k (u^{\eps,\eta,\infty}\cdot\nabla\chi)\nnn\\
&\qquad+\eps^{2\eta}\int_\Omega (A-A^\infty)A\rho^\eps_k(u^{\eps,\eta,\infty}\cdot\nabla\chi)-\eps^{\eta}\int_\Omega \left[(A-A^\infty)\cdot\nabla\chi\right] M_k \rho^\eps_k,
\end{align}
where $M_k ={1\over2}\left(|U^\infty|^2 -\eps^{2\eta}|A^\infty|^2\right)-V^\infty-1-(\gamma-1)a_{k^*}$, with $k^*=1$ if $k=2$, $k^*=2$ if $k=1$.

\begin{align}\label{I11k}
I_{7,k}:&=\eps^{2\eta}\int_\Omega (\widetilde{A}-A^\infty )(\partial_t \widetilde{A})|\phi^\eps_k |^2 \cdot\chi(x+u^{\eps,\eta,\infty}t)\nnn\\
&=\eps^{2\eta}\int_\Omega\sum^2_{j=1}(A-A^\infty )(\partial_t A_j +u^{\eps,\eta,\infty}\cdot\nabla A_j )|\psi^\eps_k|^2 \chi.
\end{align}

\begin{align}\label{I12k}
I_{8,k}:&=\int_\Omega(\partial_t \widetilde{V})|\phi^\eps_k |^2 \cdot\chi(x+u^{\eps,\eta,\infty}t)\nnn\\
&=\int_\Omega(\partial_t V+u^{\eps,\eta,\infty}\cdot\nabla V)|\psi^\eps_k |^2 \cdot\chi.
\end{align}

\begin{align}\label{I13k}
I_{9,k}:&=\int_\Omega (\widetilde{V}-V^\infty )|\phi^\eps_k |^2 \cdot\left[\nabla\chi(x+u^{\eps,\eta,\infty}t)\cdot u^{\eps,\eta,\infty}\right]\nnn\\
&=\int_\Omega (V-V^\infty )|\psi^\eps_k |^2 \cdot\left(\nabla\chi\cdot u^{\eps,\eta,\infty}\right).
\end{align}

\begin{align}\label{I16k}
I_{10,k}:&=\eps^\eta\int_\Omega (\widetilde{A}-A^\infty )(\partial_t |\phi^\eps_k |^2 )\cdot\left[\chi(x+u^{\eps,\eta,\infty}t)\cdot u^{\eps,\eta,\infty}\right]\nnn\\
&=2\eps^\eta \int_\Omega \sum^2_{k=1}\Re[u^{\eps,\eta,\infty}\cdot(\widetilde{A}-A^\infty)\overline{\phi^\eps_k}\partial_t \phi^\eps_k \chi(x+u^{\eps,\eta,\infty}t)]\nnn\\
&=2\eps^\eta \int_\Omega \sum^2_{k=1}\Re\left[u^{\eps,\eta,\infty}\cdot(\widetilde{A}-A^\infty)[\overline{\psi^\eps_k}(\partial_t \psi^\eps_k +u^{\eps,\eta,\infty}\cdot\nabla\psi^\eps_k)](t,x+u^{\eps,\eta,\infty}t)\chi(x+u^{\eps,\eta,\infty}t)\right]\nnn\\
&=\eps^\eta \int_\Omega \sum^2_{k=1} u^{\eps,\eta,\infty}\cdot(A-A^\infty)\left[\partial_t |\psi^\eps_k |^2 +u^{\eps,\eta,\infty}\cdot\nabla |\psi^\eps_k |^2\right]\chi\nnn\\
&=\eps^\eta \int_\Omega \sum^2_{k=1} \nabla[u^{\eps,\eta,\infty}\cdot(A-A^\infty)\chi]\cdot (J^\eps_k -u^{\eps,\eta,\infty}\rho^\eps_k ),
\end{align}
and
\begin{align}\label{I17k}
I_{11,k}:&=-\int_{\Omega}V^\infty (\partial_t |\phi^\eps_k |^2 )\chi(x+u^{\eps,\eta,\infty}t)\nnn\\
&=-\int_\Omega\sum^2_{k=1}V^\infty (\partial_t |\psi^\eps_k |^2 +u^{\infty}\cdot\nabla|\psi^\eps_k |^2 )\chi\nnn\\
&=-\int_\Omega\sum^2_{k=1}V^\infty (J^\eps_k -u^{\infty}\rho^\eps_k )\cdot\nabla\chi\ ,
\end{align}
where $J^\eps_k$ and $\rho^\eps_k$ are defined as before.

Moreover,
\begin{align}\label{I1k}
\int_\Omega {1\over2} |\nabla^{\eps,\eta}_{\widetilde{A}-A^\infty}\phi^\eps_k (t,x)|^2 \chi(x+u^{\eps,\eta,\infty}t)
&=\int_\Omega {1\over2} |\nabla^{\eps,\eta}_{\widetilde{A}-A^\infty}\phi^\eps_k (t,x)|^2 \chi(x+u^{\eps,\eta,\infty}t)\nnn\\
&=\int_\Omega|\eps\nabla\phi^\eps_k -i\eps^\eta (\widetilde{A}-A^\infty)\phi^\eps_k|^2 \chi(x+u^{\eps,\eta,\infty}t)\nnn\\
&=\int_\Omega|\eps(\nabla\psi^\eps_k -i\eps^{-1}U^\infty \psi^\eps_k )-i\eps^\eta (A-A^\infty )\psi^\eps_k |^2 \chi\nnn\\
&=\int_\Omega|\nabla^{\eps,\eta}_A \psi^\eps_k -iu^{\eps,\eta,\infty}\psi^\eps_k |^2 \chi\ .
\end{align}

Substituting \eqref{I5k}-\eqref{I1k} into \eqref{energychi}, it follows that
 {\allowdisplaybreaks\small\begin{align}\label{dtchi}
 &\frac{d}{dt}\left\{\int_\Omega \sum^2_{k=1}|(\nabla^{\eps,\eta}_A -iu^{\eps,\eta,\infty})\psi^\eps_k |^2 \chi +\int_\Omega \left[{1\over2}(\rho^\eps_1 +\rho^\eps_2 -1)^2 +(\gamma-1)\rho^\eps_1\rho^\eps_2\right]\chi
 +\int_\Omega \sum^2_{k=1} (V-V^\infty )|\psi^\eps_k |^2 \chi\right\}\nnn\\
 &={1\over2}\int_\Omega \sum^2_{k=1} \left|(\eps\nabla-iU^\infty )\psi^\eps_k \right|^2 \cdot(u^{\eps,\eta,\infty}\cdot\nabla\chi)-\frac{\eps^{2\eta}}{2}\int_\Omega \sum^2_{k=1} |A-A^\infty |^2 \rho^\eps_k \cdot (u^{\eps,\eta,\infty}\cdot\nabla\chi)\nnn\\
 &\quad-\eps^\eta \int_\Omega \sum^2_{k=1}(A-A^\infty )(J^\eps_k -u^{\eps,\eta,\infty}\rho^\eps_k ) \cdot (u^{\eps,\eta,\infty}\cdot\nabla\chi)-\eps^2\int_\Omega \sum^2_{k=1}\Re[(\nabla-i\eps^{-1}U^{\infty})\psi^\eps_k\cdot\nabla\chi\partial_t \overline{\psi^\eps_k }]\nnn\\
 &\quad-\eps^2\int_\Omega \sum^2_{k=1}\Re\left\{(\nabla\psi^\eps_k -i\eps^{-1}U^\infty \psi^\eps_k )\cdot\nabla\chi\left[u^{\eps,\eta,\infty}\cdot\nabla\overline{\psi^\eps_k}+i\eps^{-1}\left({1\over2}(|U^\infty|^2 -\eps^{2\eta}|A^\infty|^2 )-V^\infty-1-(\gamma-1){a}_{k^*}\right)\overline{\psi^\eps_k}\right]\right\}\nnn\\
 &\quad\underbrace{-\eps^\eta \int_\Omega \sum^2_{k,j=1}(\partial_t A_j +u^{\eps,\eta,\infty}\nabla A_j )\cdot(J^\eps_k -u^{\eps,\eta,\infty}\rho^\eps_k )\chi}_{=:I_{12}}-\eps^{2\eta} \int_\Omega \sum^2_{k,j=1}(\partial_t A_j +u^{\eps,\eta,\infty}\nabla A_j )\cdot(A-A^\infty )\rho^\eps_k \chi\nnn\\
 &\quad+\eps^{1+\eta}\int_\Omega \sum^2_{k=1}(A-A^\infty )\Im(\overline{\psi^\eps_k}\partial_t \psi^\eps_k )\cdot\nabla\chi\nnn\\
 &\quad+\eps^\eta \int_\Omega \sum^2_{k=1}(A-A^\infty )\left[u^{\eps,\eta,\infty}J^\eps_k +\eps^\eta u^{\eps,\eta,\infty}A\rho^\eps_k - \left({1\over2}(|U^\infty|^2 -\eps^{2\eta}|A^\infty|^2 )-V^\infty-1-(\gamma-1){a}_{k^*}\right)\rho^\eps_k\right]\cdot\nabla\chi\nnn\\
 &\quad+\eps^{2\eta} \int_\Omega \sum^2_{k,j=1}(\partial_t A_j +u^{\eps,\eta,\infty}\nabla A_j )\cdot(A-A^\infty )\rho^\eps_k \chi+\int_\Omega \sum^2_{k=1}\left(\partial_t V +u^{\eps,\eta,\infty}\nabla V\right)\rho^\eps_k \chi\nnn\\
 &\quad+\int_\Omega \sum^2_{k=1} (V-V^\infty )\rho^\eps_k (u^{\eps,\eta,\infty}\cdot\nabla\chi)+{1\over2}\int_\Omega (\rho^\eps_1 +\rho^\eps_2 -1)^2 (u^{\eps,\eta,\infty}\cdot\nabla\chi)\nnn\\
 &\quad+(\gamma-1)\int_\Omega \rho^\eps_1\rho^\eps_2(u^{\eps,\eta,\infty}\cdot\nabla\chi)+\underbrace{\eps^\eta\int_{\Omega} \sum^2_{k,j=1}\partial_j [u^{\eps,\eta,\infty}(A-A^\infty)\chi](J^\eps_k -u^{\eps,\eta,\infty}\rho^\eps_k )}_{=:I_{13}}\nnn\\
 &\quad-\int_\Omega\sum^2_{k=1}V^\infty (J^\eps_k -u^{\infty}\rho^\eps_k )\cdot\nabla\chi\ .
 \end{align}}
{\allowdisplaybreaks
Therefore,
\begin{align}\label{curlA}
I_{12}+I_{13}&=-\eps^\eta \int_\Omega \sum^2_{k=1}(\partial_t A)\cdot(J^\eps_k -u^{\eps,\eta,\infty}\rho^\eps_k )\chi\nnn\\
&\quad\qquad-\eps^\eta \int_\Omega \sum^2_{k=1}\text{curl}A(u^{\eps,\eta,\infty})^\bot(J^\eps_k -u^{\eps,\eta,\infty}\rho^\eps_k )\chi\nnn\\
&\quad\qquad+\eps^\eta \int_\Omega \sum^2_{k=1}u^{\eps,\eta,\infty} (A-A^\infty )(J^\eps_k -u^{\eps,\eta,\infty}\rho^\eps_k )\nabla\chi,
\end{align}}
since $\partial_j A-\nabla A_j =(-1)^{j^*}\text{curl}A\cdot e_{j^*}$, where $\{e_1, e_2\}$ is the standard basis of $\mathbb{R}^2$, and $j^* =2$ if $j=1$, $j^* =1$ if $j=2$.

Substituting \eqref{curlA} into \eqref{dtchi} and with direct computation, we get that
{\small\allowdisplaybreaks
\begin{align}\label{enegychito}
&\frac{d}{dt}\left\{\int_\Omega \sum^2_{k=1}|(\nabla^{\eps,\eta}_A -iu^{\eps,\eta,\infty})\psi^\eps_k |^2 \chi +\int_\Omega \left[{1\over2}(\rho^\eps_1 +\rho^\eps_2 -1)^2 +(\gamma-1)\rho^\eps_1\rho^\eps_2\right]\chi
+\int_\Omega \sum^2_{k=1} (V-V^\infty )|\psi^\eps_k |^2 \chi\right\}\nnn\\
&=\underbrace{{1\over2}\int_\Omega \sum^2_{k=1} \left|(\eps\nabla-iU^\infty )\psi^\eps_k \right|^2 \cdot(u^{\eps,\eta,\infty}\cdot\nabla\chi)}_{=:I_{14}}\underbrace{-\eps^2\int_\Omega \sum^2_{k=1}\Re[(\nabla-i\eps^{-1}U^{\infty})\psi^\eps_k\cdot\nabla\chi\partial_t \overline{\psi^\eps_k }]}_{=:I_{15}}\nnn\\
&\quad\underbrace{-\eps^2\int_\Omega \sum^2_{k=1}\Re\left\{(\nabla\psi^\eps_k -i\eps^{-1}U^\infty \psi^\eps_k )\cdot\nabla\chi\left[u^{\eps,\eta,\infty}\cdot\nabla\overline{\psi^\eps_k}+i\eps^{-1}\left({1\over2}(|U^\infty|^2 -\eps^{2\eta}|A^\infty|^2 )-V^\infty-1-(\gamma-1){a}_{k^*}\right)\overline{\psi^\eps_k}\right]\right\}}_{=:I_{16}}\nnn\\
&\quad-\eps^\eta \int_\Omega \sum^2_{k=1}\left\{\left[\partial_t A+\text{curl}A\cdot(u^{\eps,\eta,\infty})^\bot\right]\chi-u^{\eps,\eta,\infty}\cdot(A-A^\infty )\nabla\chi\right\}(J^\eps_k -u^{\eps,\eta,\infty}\rho^\eps_k )\nnn\\
&\quad+\eps^\eta \int_\Omega \sum^2_{k=1}(A-A^\infty )\left\{u^{\eps,\eta,\infty}J^\eps_k +\left[\eps^\eta u^{\eps,\eta,\infty}(A-A^\infty )-{1\over2}|u^{\eps,\eta,\infty}|^2 +V^\infty+1+(\gamma-1){a}_{k^*}\right]\rho^\eps_k\right\}\cdot\nabla\chi\nnn\\
&\quad-\eps^\eta \int_\Omega \sum^2_{k=1}(A-A^\infty )\left[J^\eps_k -u^{\eps,\eta,\infty}\rho^\eps_k +\frac{\eps^\eta}{2}(A-A^\infty )\rho^\eps_k\right](u^{\eps,\eta,\infty}\cdot\nabla\chi)\nnn\\
&\quad\underbrace{-\eps^{1+\eta}\int_\Omega \sum^2_{k=1}(A-A^\infty )\Im(\psi^\eps_k\partial_t \overline{\psi^\eps_k} )\cdot\nabla\chi}_{=:I_{17}}+\int_\Omega \sum^2_{k=1}\left(\partial_t V +u^{\eps,\eta,\infty}\nabla V\right)\rho^\eps_k \chi\nnn\\
&\quad+\int_\Omega \sum^2_{k=1} (V-V^\infty )\rho^\eps_k (u^{\eps,\eta,\infty}\cdot\nabla\chi)+{1\over2}\int_\Omega (\rho^\eps_1 +\rho^\eps_2 -1)^2 (u^{\eps,\eta,\infty}\cdot\nabla\chi)\nnn\\
&\quad+(\gamma-1)\int_\Omega \rho^\eps_1\rho^\eps_2(u^{\eps,\eta,\infty}\cdot\nabla\chi)-\int_\Omega\sum^2_{k=1}V^\infty (J^\eps_k -u^{\infty}\rho^\eps_k )\cdot\nabla\chi\ .
\end{align}}
Then we have
\begin{equation}\label{1term}
I_{14}={\eps^2\over2}\int_\Omega \sum^2_{k=1}|\nabla\psi^\eps_k |^2 u^{\eps,\eta,\infty}\cdot\nabla\chi-\int_\Omega \sum^2_{k=1} U^\infty \cdot\left[J^\eps_k +\left(-{1\over2}U^\infty +\eps^\eta A\right)\rho^\eps_k \right](u^{\eps,\eta,\infty}\cdot\nabla\chi),
\end{equation}
and
{\small\begin{align}\label{2term}
	I_{16}&=-\eps^2\int_\Omega \sum^2_{k=1}\Re[(\nabla\psi^\eps_k \cdot\nabla\chi)u^{\eps,\eta,\infty}\cdot\nabla\overline{\psi^\eps_k}]+\int_\Omega \sum^2_{k=1}(U^\infty \cdot\nabla\chi)u^{\eps,\eta,\infty}\cdot(J^\eps_k +\eps^\eta A\rho^\eps_k )\nnn\\
	&\quad\quad+\int_\Omega \sum^2_{k=1}\left[{1\over2}(|U^\infty|^2 -\eps^{2\eta}|A^\infty|^2 )-V^\infty-1-(\gamma-1){a}_{k^*}\right]\left[J^\eps_k -u^{\eps,\eta,\infty}\rho^\eps_k +\eps^\eta(A-A^\infty)\rho^\eps_k\right]\cdot\nabla\chi\ .
	\end{align}}
By \eqref{transftrap}, we have that
{\allowdisplaybreaks
\begin{align}\label{3term}
I_{15} +I_{17} &=-\eps^2\int_\Omega \sum^2_{k=1}\Re[\nabla\psi^\eps_k \cdot\nabla\chi\partial_t \overline{\psi^\eps_k }]-\eps^{1+\eta}\int_\Omega \sum^2_{k=1}\Im(A\cdot\nabla\chi\psi^\eps_k \partial_t \overline{\psi^\eps_k })\nnn\\
&\qquad-{{\eps^2}\over2}\int_\Omega \sum^2_{k=1}|\nabla\psi^\eps_k |^2 u^{\eps,\eta,\infty}\cdot\nabla\chi+{{\eps^2}\over4}\int_\Omega \sum^2_{k=1}\Delta(u^{\eps,\eta,\infty}\cdot\nabla\chi)\rho^\eps_k\nnn\\
&\qquad+\eps^\eta\int_\Omega \sum^2_{k=1}(u^{\eps,\eta,\infty}\cdot\nabla\chi)A\cdot\left[J^\eps_k +{{\eps^\eta}\over2}A\rho^\eps_k\right]-\int_\Omega \sum^2_{k=1}V\rho^\eps_k (u^{\eps,\eta,\infty}\cdot\nabla\chi)\nnn\\
&\qquad-\int_\Omega \sum^2_{k=1}(u^{\eps,\eta,\infty}\cdot\nabla\chi)\left(|\rho^\eps_1 |^2 +|\rho^\eps_2 |^2 +2\gamma\rho^\eps_1 \rho^\eps_2\right).
\end{align}}
{\allowdisplaybreaks
{\bf Step 3.} We observe that
\begin{align}\label{4term}
&-\int_\Omega \sum^2_{k=1} U^\infty \cdot\left[J^\eps_k +\left(-{1\over2}U^\infty +\eps^\eta A\right)\rho^\eps_k \right](u^{\eps,\eta,\infty}\cdot\nabla\chi)+\eps^\eta\int_\Omega \sum^2_{k=1}(u^{\eps,\eta,\infty}\cdot\nabla\chi)A\cdot\left[J^\eps_k +{{\eps^\eta}\over2}A\rho^\eps_k\right]\nnn\\
&\qquad-\eps^\eta \int_\Omega \sum^2_{k=1}(A-A^\infty )\left[J^\eps_k -u^{\eps,\eta,\infty}\rho^\eps_k +\frac{\eps^\eta}{2}(A-A^\infty )\rho^\eps_k\right](u^{\eps,\eta,\infty}\cdot\nabla\chi)\nnn\\
&=-\int_\Omega \sum^2_{k=1}u^{\eps,\eta,\infty}\cdot\left[J^\eps_k -{1\over2}u^{\eps,\eta,\infty}\rho^\eps_k \right](u^{\eps,\eta,\infty}\cdot\nabla\chi).
\end{align}}
Consequently, by \eqref{enegychito}-\eqref{4term}, we get that
{\small\allowdisplaybreaks
\begin{align}\label{all1}
\frac{d}{dt}e(t)&=-\eps^2\int_\Omega \sum^2_{k=1}\Re[(\nabla\psi^\eps_k \cdot\nabla\chi)u^{\eps,\eta,\infty}\cdot\nabla\overline{\psi^\eps_k}]+{{\eps^2}\over4}\int_\Omega \sum^2_{k=1}\Delta(u^{\eps,\eta,\infty}\cdot\nabla\chi)\rho^\eps_k\nnn\\
&\quad+\int_\Omega \sum^2_{k=1}\left[\eps^\eta (A\cdot\nabla\chi)u^{\eps,\eta,\infty}\cdot(J^\eps_k +\eps^\eta A\rho^\eps_k )\right]\nnn\\
&\quad+\int_\Omega \sum^2_{k=1}\left[\eps^\eta u^{\eps,\eta,\infty}A+{1\over2}|u^{\eps,\eta,\infty}|^2 -1-(\gamma-1){a}_{k^*}\right]J^\eps_k \cdot\nabla\chi\nnn\\
&\quad-\eps^\eta\int_\Omega \sum^2_{k=1}\partial_t A \cdot J^\eps_k (1-\chi)+\left[\partial_t A +\text{curl}A\cdot(u^{\eps,\eta,\infty})^\bot \right]\cdot(J^\eps_k -u^{\eps,\eta,\infty}\rho^\eps_k )\chi\nnn\\
&\quad-\int_\Omega \sum^2_{k=1}V^\infty J^\eps_k \cdot\nabla\chi+\int_\Omega \sum^2_{k=1}(\partial_t V)\rho^\eps_k +\int_\Omega \sum^2_{k=1}(\nabla V\cdot u^{\eps,\eta,\infty})\rho^\eps_k \chi\nnn\\
&\quad-\int_\Omega \left\{{1\over2}\left[(\rho^\eps_1 +\rho^\eps_2)^2 -1\right] +(\gamma-1)\rho^\eps_1 \rho^\eps_2 \right\}\cdot(u^{\eps,\eta,\infty} \cdot\nabla\chi).
\end{align}}
In addition, by a direct calculation, it yields that
\begin{align}
-\eps^2\int_\Omega \sum^2_{k=1}\Re[(\nabla\psi^\eps_k \cdot\nabla\chi)u^{\eps,\eta,\infty}\cdot\nabla\overline{\psi^\eps_k}]
&=-\int_\Omega \sum^2_{k,j,l=1}\Re\left[(\partial^{\eps,\eta}_{A_j}\psi^\eps_k \overline{\partial^{\eps,\eta}_{A_l}\psi^\eps_k})u^{\eps,\eta,\infty}_l \partial_j \chi\right]\nnn\\
&\qquad\quad-\eps^\eta\int_\Omega \sum^2_{k=1}\left[(J^\eps_k \cdot u^{\eps,\eta,\infty})A+(A\cdot u^{\eps,\eta,\infty})(J^\eps_k +\eps^\eta A\rho^\eps_k )\right]\cdot\nabla\chi.\label{1termall1}
\end{align}
Combining \eqref{energyt}, \eqref{all1} and \eqref{1termall1}, we obtain that
{\small\allowdisplaybreaks
\begin{align*}
\frac{d}{dt}e(t)&=-\int_\Omega \sum^2_{k,j,l=1}\Re\left[(\partial^{\eps,\eta}_{A_j}\psi^\eps_k \overline{\partial^{\eps,\eta}_{A_l}\psi^\eps_k})u^{\eps,\eta,\infty}_l \partial_j \chi\right]+{{\eps^2}\over4}\int_\Omega \sum^2_{k=1}\Delta(u^{\eps,\eta,\infty}\cdot\nabla\chi)\rho^\eps_k\nnn\\
&\quad+\int_\Omega \sum^2_{k=1}\left[{1\over2}|u^{\eps,\eta,\infty}|^2 -1-(\gamma-1){a}_{k^*}-V^\infty\right]J^\eps_k \cdot\nabla\chi-\eps^\eta \partial_t A\cdot J^\eps_k (1-\chi)\\
&\quad-\eps^\eta \int_\Omega \sum^2_{k=1}\left[\partial_t A +\text{curl}A\cdot(u^{\eps,\eta,\infty})^\bot \right]\cdot(J^\eps_k -u^{\eps,\eta,\infty}\rho^\eps_k )\chi\\
&\quad+\int_\Omega \sum^2_{k=1}(\partial_t V)\rho^\eps_k +\int_\Omega \sum^2_{k=1}(\nabla V\cdot u^{\eps,\eta,\infty})\rho^\eps_k \chi\nnn\\
&\quad-\int_\Omega \left\{{1\over2}\left[(\rho^\eps_1 +\rho^\eps_2)^2 -1\right] +(\gamma-1)\rho^\eps_1 \rho^\eps_2 \right\}\cdot(u^{\eps,\eta,\infty} \cdot\nabla\chi),
\end{align*}}
which gives us the desired identity of part (i).

\vskip0.3cm

(ii) The proof of this part is similar to the one in \cite{ll08}, we skip it here.

(iii) The proof is similar to the approach in \cite{ll08} when $\alpha=0$. Multiplying $\partial^{\eps,\eta}_{A_j}\psi^\eps_1$ to the complex conjugate of $\psi^\eps_1$-equation in \eqref{transftrap}, we then have that
\begin{multline*}
-i\eps\partial_t \overline{\psi^\eps_1}\cdot\partial^{\eps,\eta}_{A_j}\psi^\eps_1=-{1\over2}\overline{\Delta^{\eps,\eta}_A \psi^\eps_1}\cdot\partial^{\eps,\eta}_{A_j}\psi^\eps_1+V\overline{\psi^\eps_1}\cdot\partial^{\eps,\eta}_{A_j}\psi^\eps_1 +\overline{\psi^\eps_1}\cdot\partial^{\eps,\eta}_{A_j}\psi^\eps_1 \\
+(|\psi^\eps_1|^2 +|\psi^\eps_2 |^2 -1)\overline{\psi^\eps_1}\cdot\partial^{\eps,\eta}_{A_j}\psi^\eps_1 +(\gamma-1)|\psi^\eps_2 |^2 \overline{\psi^\eps_1}\cdot\partial^{\eps,\eta}_{A_j}\psi^\eps_1.
\end{multline*}

On the other hand, taking $\partial^{\eps,\eta}_{A_j}$ on the $\psi^\eps_1$-equation in \eqref{transftrap} and multiplying the resulting equation by $\overline{\psi^\eps_1}$, we obtain that
\begin{multline*}
i\eps\overline{\psi^\eps_1}\partial^{\eps,\eta}_{A_j}(\partial_t \psi^\eps_1 )=-{1\over2}\overline{\psi^\eps_1}\partial^{\eps,\eta}_{A_j}(\Delta^{\eps,\eta}_A \psi^\eps_1 )+\overline{\psi^\eps_1}\partial^{\eps,\eta}_{A_j}(V\psi^\eps_1 )+\overline{\psi^\eps_1}(\partial^{\eps,\eta}_{A_j}\psi^\eps_1 )\\
+\overline{\psi^\eps_1}\partial^{\eps,\eta}_{A_j}[(|\psi^\eps_1|^2 +|\psi^\eps_2 |^2 -1)\psi^\eps_1 ]+(\gamma-1)\overline{\psi^\eps_1}\partial^{\eps,\eta}_{A_j}(|\psi^\eps_2|^2 \psi^\eps_1).
\end{multline*}

Adding the above two identities together and choosing the real part of the resulting equation, it follows that
\begin{align}\label{J1}
\partial_t J^\eps_{1,j}+{1\over2}\eps^{-1}\Re[\overline{\Delta^{\eps,\eta}_A \psi^\eps_1}\cdot\partial^{\eps,\eta}_{A_j}\psi^\eps_1& -\overline{\psi^\eps_1}\partial^{\eps,\eta}_{A_j}(\Delta^{\eps,\eta}_A \psi^\eps_1 )]\notag\\
&+(\partial_j V)\rho^\eps_1 +{1\over2}\partial_j |\rho^\eps_1 |^2 +\gamma\rho^\eps_1 (\partial_j \rho^\eps_2 )+ (\partial_t A_j )\rho^\eps_1=0.
\end{align}
Applying Claim 1 in the proof of Lemma 2.1 in \cite{ll08} with $\alpha=0$, it yields that
\begin{align}\label{claim}
\eps^{-1}\Re[\overline{\Delta^{\eps,\eta}_A \psi^\eps_1}\cdot\partial^{\eps,\eta}_{A_j}\psi^\eps_1& -\overline{\psi^\eps_1}\partial^{\eps,\eta}_{A_j}(\Delta^{\eps,\eta}_A \psi^\eps_1 )]\notag\\
&=2\sum^2_{l=1}\partial_l \Re[\overline{\partial^{\eps,\eta}_{A_l}\psi^\eps_1}\cdot\partial^{\eps,\eta}_{A_j}\psi^\eps_1]
-{{\eps^2}\over2}\partial_j (\Delta\rho^\eps_1 )+2(-1)^{j} \text{curl}A\cdot J^\eps_{1,j^*},
\end{align}
where $j^* =2$ if $j=1$, and $j^* =1$ if $j=2$.
Substituting \eqref{claim} into \eqref{J1}, we get that
\begin{multline*}
\partial_t J^\eps_{1,j}+\sum^2_{l=1}\partial_l \left[\Re(\overline{\partial^{\eps,\eta}_{A_l}\psi^\eps_1}\cdot\partial^{\eps,\eta}_{A_j}\psi^\eps_1)\right]-{{\eps^2}\over4}\partial_j (\Delta\rho^\eps_1 )\\
+(-1)^{j} \text{curl}A\cdot J^\eps_{1,j^*} +(\partial_j V)\rho^\eps_1 +{1\over2}\partial_j |\rho^\eps_1 |^2 +\gamma\rho^\eps_1 \partial_j \rho^\eps_2 + (\partial_t A_j )\rho^\eps_1 =0.
\end{multline*}
Similarly, we have the analogous identity for $\psi^\eps_2$
\begin{multline*}
\partial_t J^\eps_{2,j}+\sum^2_{l=1}\partial_l \left[\Re(\overline{\partial^{\eps,\eta}_{A_l}\psi^\eps_2}\cdot\partial^{\eps,\eta}_{A_j}\psi^\eps_2)\right]-{{\eps^2}\over4}\partial_j (\Delta\rho^\eps_2 )\\
+(-1)^{j} \text{curl}A\cdot J^\eps_{2,j^*} +(\partial_j V)\rho^\eps_2 +{1\over2}\partial_j |\rho^\eps_2 |^2 +\gamma\rho^\eps_2 \partial_j \rho^\eps_1 + (\partial_t A_j )\rho^\eps_2 =0.
\end{multline*}
Adding the above two identities together, we have that
\begin{multline}
\sum^2_{k=1}\partial_t J^\eps_{k, j}+\sum^2_{k,l=1}\partial_l \left[\Re(\overline{\partial^{\eps,\eta}_{A_l}\psi^\eps_k}\cdot\partial^{\eps,\eta}_{A_j}\psi^\eps_k)\right]-{{\eps^2}\over4}\sum^2_{k=1}\partial_j (\Delta\rho^\eps_k )
+\sum^2_{k=1}(\partial_j V)\rho^\eps_k\\ +{1\over2}\sum^2_{k=1}\partial_j |\rho^\eps_k |^2 +\gamma\partial_j (\rho^\eps_1 \rho^\eps_2 )+\sum^2_{k=1} (\partial_t A_j )\rho^\eps_k
+\sum^2_{k=1}(-1)^{j} \text{curl}A\cdot J^\eps_{k,j^*} =0.
\end{multline}
The proof is complete.

\end{proof}

\section{Proof of Theorem \ref{main}}

In this section, we introduce the modulated energy functional \eqref{Hfunct} as we mentioned before, which will be called the $H$-function. $(\psi^\eps_1, \psi^\eps_2 )$ is the solution for the system \eqref{trap} with the Neumann boundary condition \eqref{0219-7}, and $(\rho, u)$ is the solution of \eqref{curlfield}-\eqref{sbc}, respectively. We can also rewrite $H$-function \eqref{Hfunct} as below:
\begin{equation}
{H}(t)={{\eps^2}\over2}\int_{\Omega}\sum^2_{k=1}\left|\nabla\sqrt{\rho^\eps_k}\right|^2 dx+{1\over2}\int_\Omega \sum^{2}_{k=1}\frac{1}{\rho^\eps_k}\left|J^\eps_k -\rho^\eps_k u\right|^2 dx+\int_\Omega \left[{1\over2}(\rho^\eps_1 +\rho^\eps_2 -\rho)^2 +(\gamma-1)\rho^\eps_1 \rho^\eps_2 \right]dx,\label{haha11241106}
 \end{equation}
which gives us a clear view of that how the $H$-function controls the propagation of densities and linear momenta under the effect of rotating fields and trap potentials.

With the cut-off function $\chi$ defined in \eqref{chifctn} and setting $R\ge R_0$, 
it is obvious that
\begin{align}\label{Hexpansion}
H(t)&=e(t)-\int_\Omega \sum^2_{k=1}[(V+1)(1-\chi)+(V-V^\infty)\chi]\rho^\eps_k
-\int_\Omega \sum^2_{k=1}(u-u^{\eps,\eta,\infty}\chi)(J^\eps_k -u^{\eps,\eta,\infty}\rho^\eps_k )\nnn\\
&\qquad+{1\over2}\int_\Omega \sum^2_{k=1}(|u|^2 -2u\cdot u^{\eps,\eta,\infty}+|u^{\eps,\eta,\infty}|^2 \chi)\rho^\eps_k
+\int_\Omega \left\{{1\over2}(\rho-1)^2 -\sum^2_{k=1}(\rho^\eps_k -a_k )(\rho-1)\right\}.
\end{align}

Now we may use Lemma~\ref{laws} to calculate the time derivative of \eqref{Hexpansion} term by term.
\begin{align}\label{termV}
&-\frac{d}{dt}\int_\Omega \sum^2_{k=1}[(V+1)(1-\chi)+(V-V^\infty)\chi]\rho^\eps_k\nnn\\
&\qquad=-\int_\Omega\sum^2_{k=1}\Big\{\rho^\eps_k (\partial_t V)+\left[(V+1)(1-\chi)+(V-V^\infty)\chi\right]\partial_t \rho^\eps_k\Big\}\nnn\\
&\qquad=-\int_\Omega\sum^2_{k=1}\Big\{\rho^\eps_k (\partial_t V)+J^\eps_k \cdot\nabla V-(V^\infty +1)J^\eps_k \cdot\nabla\chi\Big\}.
\end{align}
Here we used the integration by parts and the Neumann boundary conditions \eqref{0219-7}.

By \eqref{mass}-\eqref{momentum} and the Neumann boundary conditions \eqref{0219-7}, then integrating by parts, we have the following
{\allowdisplaybreaks\begin{align}\label{Jk}
&-\frac{d}{dt}\int_\Omega \sum^2_{k=1}(u-u^{\eps,\eta,\infty}\chi)(J^\eps_k -u^{\eps,\eta,\infty}\rho^\eps_k )\nnn\\
&\qquad=-\int_\Omega \sum^2_{k=1}(\partial_t u)(J^\eps_k -u^{\eps,\eta,\infty}\rho^\eps_k )
-\int_\Omega \sum^2_{k=1}(u-u^{\eps,\eta,\infty}\chi)(\partial_t J^\eps_k -u^{\eps,\eta,\infty}\partial_t \rho^\eps_k )\nnn\\
&\qquad=-\int_\Omega \sum^2_{k=1}(\partial_t u)(J^\eps_k -u^{\eps,\eta,\infty}\rho^\eps_k )
+\int_\Omega \sum^2_{j,k=1}(u_j -u^{\eps,\eta,\infty}_j\chi)\left\{\sum^2_{l=1}\partial_l \Re(\overline{\partial^{\eps,\eta}_{A_l}\psi^\eps_k}\partial^{\eps,\eta}_{A_j}\psi^\eps_k)\right.\nnn\\
&\qquad\quad-{{\eps^2}\over4}\partial_j (\Delta\rho^\eps_k )+\partial_j \left[{1\over2}(\rho^\eps_1 +\rho^\eps_2 )^2 +(\gamma-1)\rho^\eps_1 \rho^\eps_2\right]+\rho^\eps_k \partial_j V\nnn\\
&\qquad\quad+ \rho^\eps_k \partial_t A_j +(-1)^j  {\text{curl}}A\cdot J^\eps_{k,j^*}-u^{\eps,\eta,\infty}_j \text{div}J^\eps_k \Big\}\nnn\\
&\qquad=-\int_\Omega \sum^2_{k=1}(\partial_t u)(J^\eps_k -u^{\eps,\eta,\infty}\rho^\eps_k )
-\int_\Omega \sum^2_{j,k,l=1}\partial_l (u_j -u^{\eps,\eta,\infty}_j\chi)\left[\Re(\overline{\partial^{\eps,\eta}_{A_l}\psi^\eps_k}\partial^{\eps,\eta}_{A_j}\psi^\eps_k)-u^{\eps,\eta,\infty}_j J^\eps_{k,l}\right]\nnn\\
&\qquad\quad-{{\eps^2}\over4}\int_\Omega \sum^2_{k=1}\nabla\text{div}(u-u^{\eps,\eta,\infty}\chi)\cdot\nabla\rho^\eps_k\nnn\\
&\qquad\quad -\int_\Omega \text{div}(u-u^{\eps,\eta,\infty}\chi)\left[\frac{(\rho^\eps_1 +\rho^\eps_2 )^2 -1}{2}+(\gamma-1)\rho^\eps_1 \rho^\eps_2  \right]\nnn\\
&\qquad\quad+\int_\Omega \sum^2_{k=1}(u-u^{\eps,\eta,\infty}\chi)(\nabla V+ \partial_t A)\rho^\eps_k
-\int_\Omega \sum^2_{k=1} \text{curl}A(u-u^{\eps,\eta,\infty}\chi)^{\bot}\cdot J^\eps_k.
\end{align}}
Moreover, as the calculation in \cite{lz06}, we have
\begin{align}\label{term3}
&{1\over2}\frac{d}{dt}\int_\Omega \sum^2_{k=1}(|u|^2 -2u\cdot u^{\eps,\eta,\infty}+|u^{\eps,\eta,\infty}|^2 \chi)\rho^\eps_k\nnn\\
&\qquad={1\over2}\int_\Omega \sum^2_{k=1}J^\eps_k \cdot\nabla(|u|^2 -2u\cdot u^{\eps,\eta,\infty}+|u^{\eps,\eta,\infty}|^2 \chi)
+\int_\Omega \sum^2_{k=1}\rho^\eps_k (u-u^{\eps,\eta,\infty})\partial_t u,
\end{align}
and
\begin{align}\label{term4}
&\frac{d}{dt}\int_\Omega \left\{{1\over2}(\rho-1)^2 -\sum^2_{k=1}(\rho^\eps_k -a_k )(\rho-1)\right\}\nnn\\
&\qquad=-{1\over2}\int_\Omega (\rho^2 -1)\text{div}u+\int_\Omega \sum^2_{k=1}\nabla\rho\cdot(\rho^\eps_k u-J^\eps_k )+\int_\Omega\big[\rho(\rho^\eps_1 +\rho^\eps_2 )-1\big]\text{div}u.
\end{align}
{\allowdisplaybreaks
Combining \eqref{energy} and \eqref{Hexpansion}-\eqref{term4}, we have that
\begin{align}
\frac{d}{dt}H(t)&=-\int_\Omega\sum^2_{j,k,l=1}\partial_l u_j\Re\left(\partial^{\eps,\eta}_{A_j}\psi^\eps_k \overline{\partial^{\eps,\eta}_{A_l}\psi^\eps_k}\right)+2\int_\Omega\sum^2_{k=1}(u\cdot\nabla u)J^\eps_k -\int_\Omega\sum^2_{k=1}(u\cdot\nabla u)\cdot u\rho^\eps_k\nnn\\
&\quad -2\int_\Omega\sum^2_{k=1}(u\cdot\nabla u)J^\eps_k +\int_\Omega\sum^2_{k=1}(u\cdot\nabla u)\cdot u\rho^\eps_k
+\int_\Omega\sum^2_{k=1}\left[(\nabla u\cdot u^{\eps,\eta,\infty})J^\eps_k -|u^{\eps,\eta,\infty}|^2 \cdot\nabla\chi\cdot J^\eps_k\right]\nnn\\
&\quad-{{\eps^2}\over4}\int_\Omega \sum^2_{k=1}(\nabla\text{div}u)\nabla\rho^\eps_k -\int_\Omega{1\over2}(\rho^\eps_1 +\rho^\eps_2 -\rho)^2 (\text{div}u)\nnn\\
&\quad-(\gamma-1)\int_\Omega \sum^2_{k=1}\rho^\eps_1 \rho^\eps_2 (\text{div}u)
+\int_\Omega\sum^2_{k=1}|u^{\eps,\eta,\infty}|^2 (J^\eps_k \cdot\nabla\chi)\nnn\\
&\quad+ \int_\Omega\sum^2_{k=1}\left[\partial_t A+\text{curl}A\cdot u^\bot \right](u\rho^\eps_k -J^\eps_k )
+\int_\Omega\sum^2_{k=1}(\partial_t u +\nabla V+\nabla\rho)(u\rho^\eps_k -J^\eps_k )\nnn\\
&\quad+\int_\Omega\sum^2_{k=1}(u\cdot\nabla u)J^\eps_k -\int_\Omega\sum^2_{k=1}(\nabla u\cdot u^{\eps,\eta,\infty})J^\eps_k\nnn\\
&=-\int_\Omega\sum^2_{j,k,l=1}(\partial_l u_j)\Re\left[(\partial^{\eps,\eta}_{A_j}\psi^\eps_k -iu_j \psi^\eps_k ) (\overline{\partial^{\eps,\eta}_{A_l}\psi^\eps_k -iu_l \psi^\eps_k})\right]-{{\eps^2}\over4}\int_\Omega \sum^2_{k=1}(\nabla\text{div}u)\nabla\rho^\eps_k\nnn\\
&\quad-\int_\Omega{1\over2}(\rho^\eps_1 +\rho^\eps_2 -\rho)^2 (\text{div}u)-(\gamma-1)\int_\Omega \sum^2_{k=1}\rho^\eps_1 \rho^\eps_2 (\text{div}u),
\end{align}}
since $u\cdot u^\bot =0$ and \eqref{curlfield}, i.e.,
 \begin{align}\label{dH}
 \frac{d}{dt}H(t)&=-\int_\Omega\sum^2_{j,k,l=1}(\partial_l u_j)\Re\left[(\partial^{\eps,\eta}_{A_j}\psi^\eps_k -iu_j \psi^\eps_k ) (\overline{\partial^{\eps,\eta}_{A_l}\psi^\eps_k -iu_l \psi^\eps_k})\right]-{{\eps^2}\over4}\int_\Omega \sum^2_{k=1}(\nabla\text{div}u)\nabla\rho^\eps_k\nnn\\
&\qquad-\int_\Omega{1\over2}(\rho^\eps_1 +\rho^\eps_2 -\rho)^2 (\text{div}u)-(\gamma-1)\int_\Omega \sum^2_{k=1}\rho^\eps_1 \rho^\eps_2 (\text{div}u).
 \end{align}

Now, we prove \eqref{converge} for $\eta=0$.
From the analysis above, we need to estimate some terms of \eqref{dH}:
{\allowdisplaybreaks\begin{align}\label{est1}
&\left|-{{\eps^2}\over4}\int_\Omega \sum^2_{k=1}(\nabla\text{div}u)\nabla\rho^\eps_k\right|\nnn\\
&\qquad=\left|-\frac{\eps}{2}\int_\Omega \sum^2_{k=1}(\nabla\text{div}u)\Re[\overline{\psi^\eps_k}\cdot(\nabla_A -iu)\psi^\eps_k ]\right|\nnn\\
&\qquad\le\frac{\eps}{4}\int_\Omega \sum^2_{k=1}|(\nabla_A -iu)\psi^\eps_k |^2 +\frac{\eps}{4}\int_\Omega |\nabla\text{div}u|^2 (\rho^\eps_1 +\rho^\eps_2 -\rho+\rho)\nnn\\
&\qquad\le\frac{\eps}{4}\int_\Omega \sum^2_{k=1}|(\nabla_A -iu)\psi^\eps_k |^2 +\frac{\eps}{4}\int_\Omega |\nabla\text{div}u|^4 +\frac{\eps}{2}\int_\Omega(\rho^\eps_1 +\rho^\eps_2 -\rho)^2 +\rho^2\nnn\\
&\qquad\le C\eps\left[H(t)+\|u\|^4_{H^3 (\Omega)}+\|\rho\|^2_{L^2 (\Omega)}\right],
\end{align}}
where $\nabla_A =\eps\nabla-iA$.
Hence, together with \eqref{dH}, it follows that
\begin{equation*}
\frac{d}{dt}H(t)\le CH(t)+O(\eps).
\end{equation*}
Applying the Gronwall's inequality to the above inequality, we obtain
\begin{equation*}
H(t)\le C(H(0)+O(\eps)).
\end{equation*}
By the assumption (A2) in the introduction, we conclude that
{\allowdisplaybreaks\begin{align}
H(0)&={1\over2}\int_\Omega \sum^2_{k=1}|(\nabla_{A_0} -iu_0 )\psi^\eps_{k,0} |^2 +{1\over2}\int_\Omega (\rho^\eps_{1,0} +\rho^\eps_{2,0} -\rho_0 )^2
+(\gamma-1)\int_\Omega \rho^\eps_{1,0}\rho^\eps_{2,0}\nnn\\
&={1\over2}\int_\Omega \sum^2_{k=1}\left|\eps\nabla\sqrt{\rho^\eps_{k,0}}-i(u_0 -\nabla S^\eps_{k,0} +A_0 )\sqrt{\rho^\eps_{k,0}}\right|^2\nnn\\
&\qquad+{1\over2}\int_\Omega (\rho^\eps_{1,0} +\rho^\eps_{2,0} -\rho_0 )^2
+(\gamma-1)\int_\Omega \rho^\eps_{1,0}\rho^\eps_{2,0}\nnn\\
&={1\over2}\int_\Omega \sum^2_{k=1}\eps^2\left|\nabla\sqrt{\rho^\eps_{k,0}}\right|^2 +{1\over2}\int_\Omega \sum^2_{k=1} \rho^\eps_{k,0}|u_0 -\nabla S^\eps_{k,0} +A_0|^2\nnn\\
&\qquad+{1\over2}\int_\Omega (\rho^\eps_{1,0} +\rho^\eps_{2,0} -\rho_0 )^2
+(\gamma-1)\int_\Omega \rho^\eps_{1,0}\rho^\eps_{2,0}\nnn\\
&\longrightarrow0\quad\text{as}\ \ \eps\to0,
\end{align}}
for any $t\in [0, T_* )$. Therefore, we have
\begin{equation}\label{Ht0}
H(t)\longrightarrow0\ \ \ \text{as}\ \ \eps\to0, \ \ \text{for}\ \ t\in[0,T_*).
\end{equation}
In particular, both \eqref{haha11241106} and \eqref{Ht0} yield that
\begin{equation}
\rho^\eps_1 (t,x) +\rho^\eps_2 (t,x)-1 \longrightarrow \rho(t,x)-1\ \ \text{in}\ \ L^\infty \left([0,T_* ), L^2 (\Omega)\right)\ \ \text{as}\ \ \eps\to0.
\end{equation}
Moreover,
\begin{equation*}
J^\eps_1 (t,x)+J^\eps_2 (t,x)-(\rho u)(t,x)=\sum^2_{k=1}\Im[(\nabla^{\eps,\eta}_A \psi^\eps_k -iu\psi^\eps_k)\overline{\psi^\eps_k }](t,x)+[(\rho^\eps_1 +\rho^\eps_2 -\rho)u](t,x).
\end{equation*}
Applying Cauchy-Schwarz inequality to the above identity, we obtain that
\begin{equation*}
J^\eps_1 (t,x)+J^\eps_2 (t,x)-(\rho u)(t,x)\longrightarrow0\ \ \text{in}\ \ L^\infty\left([0,T_*), L^1_{\text{loc}}(\Omega)\right)\ \ \text{as}\ \ \eps\to0.
\end{equation*}

Finally, when $\gamma>1$, \eqref{haha11241106} and \eqref{Ht0} immediately imply \eqref{rho1124}.
Therefore we complete the proof of Theorem~\ref{main}.

\begin{remark}\label{rk-0406}
By \eqref{haha11241106} and \eqref{Ht0}, we have $\displaystyle\int_\Omega\frac{1}{\rho^\eps_k}\left|J^\eps_k -\rho^\eps_k u\right|^2 dx\to0$ as $\eps\downarrow0$, $k=1, 2$. Along with the conservation of mass density (see Lemma~\ref{laws}(ii)), we obtain
\begin{align*}
J^\eps_k (t,x)-(\rho_k^{\eps} u)(t,x)\longrightarrow0\ \ \text{in}\ \ L^\infty\left([0,T_*), L^1_{\text{loc}}(\Omega)\right)\ \ \text{as}\ \ \eps\to0.
\end{align*}
\end{remark}

\section{Concluding remarks and further problems}\label{sec-remark}
Thanks to the previous work of Lee and Lin \cite{ll08} and the Galilean transformation for the wave function introduced by Lin and Zhang~\cite{lz05}, we study the semi-classical limit of the Gross--Pitaevskii system~\eqref{trap}  for the rotating two-component BECs in the exterior domain in $\mathbb{R}^2$  with non-zero conditions at infinity. On a formal level, \eqref{trap} resembles a coupled systems of nonlinear Schr\"{o}dinger equations with a small parameter $\eps$ scaled by the Planck's constant. However, the parameter appears in front of derivatives of this model is mathematically singular. To investigate the behavior of mass densities and linear momenta in the semi-classical scaling, we assume that in the system \eqref{trap}, the binary mixture of rotating BECs share the same trap potential $V$ and the same rotating field $A$. We further assume that both phase functions~$S^\eps_{1,0}$ and $S^\eps_{2,0}$ act as a same constant velocity~$U^\infty$. The underlying idea is based upon a new Galilean transformation and a modulated energy functional. Along with the standard argument in \cite{ll08,lz05}, we obtain the convergence to the compressible Euler equation~\eqref{curlfield} with the well-prepared initial data. We stress that such a limiting equation cannot be directly obtained from the standard Madelung transformation. As shown in Theorem~\ref{main}, when $\gamma\geq1$, the propagation of $\rho_1^{\eps}+\rho_2^{\eps}$ and $J_1^{\eps}+J_2^{\eps}$ are controlled by the mass density and linear momentum of this compressible Euler equation, respectively. Also the effect of rotating field on the superfluid in the region far away from the obstacle is precisely described and brings in some interesting phenomenon.

We would like to point out that the underlying argument based on the corresponding conservation laws (cf. Lemma~\ref{laws}) provides a basic understanding on the semi-classical analysis of the Gross--Pitaevskii systems. However, it has a limitation due to the fact that the proper forms of the energy functional \eqref{eg-17-1124} and the modulated energy functional~\eqref{Hfunct} depend strongly on the symmetry of $\psi^\epsilon_1$ and $\psi^\epsilon_2$.  Regarding the general rotating two-component BECs, one can consider the following Gross--Pitaevskii system instead of \eqref{trap}:
\begin{equation}\label{newtrap}
\left\{
\begin{array}{ll}
i\epsilon\partial_t \psi^{\epsilon}_1 =-{1\over2}\Delta^{\epsilon, \eta}_{A_1} \psi^\epsilon_1 +V_1 \psi^\epsilon_1 +|\psi^\epsilon_1 |^2 \psi^\epsilon_1 +\gamma |\psi^\epsilon_2 |^2 \psi^\epsilon_1,&\quad\,\,\, x\in\Omega,\ t>0,\\
[-1em]\\
\\[-1em]
i\epsilon \partial_t \psi^{\epsilon}_2 =-{1\over2}\Delta^{\epsilon, \eta}_{A_2} \psi^\epsilon_2 +V_2 \psi^\epsilon_2 +|\psi^\epsilon_2 |^2 \psi^\epsilon_2 +\gamma |\psi^\epsilon_1 |^2 \psi^\epsilon_2,&\quad\,\,\, x\in\Omega,\ t>0,\\
 \psi^{\epsilon}_k \big|_{t=0} = \psi^{\epsilon}_{k,0}(x):=\sqrt{\rho^\eps_{k,0}(x)}\exp\left({\frac{i}{\eps}S^\eps_{k,0} (x)}\right),\,\,k=1,2,&\quad\,\,\, x\in\Omega,\\
 \rho^{\epsilon}_{k,0}(x)\to {a_k},\quad\mathrm{and}\quad{S}^\eps_{k,0} (x)\to{U}_k^{\infty}\cdot {x}, &\text{as}\,\, |x|\to\infty,\,\, t>0,
 \end{array}
\right.
\end{equation}
where the rotating field $A_k$ satisfies \eqref{0219-2}--\eqref{0219-3} with $(A,A^{\infty})=(A_k,A_k^{\infty})$, the trap potential $V_k$ satisfies \eqref{0219-5} with $(V,V^{\infty})=(V_k,V_k^{\infty})$, and ${U}_k^{\infty}$ is a constant two-vector, $k=1, 2$.
When $\gamma=0$ and $a_k>0$, \eqref{newtrap} is decoupled. Then, under the boundary condition~\eqref{0219-7}, we may follow the similar argument in the proof of Theorem~\ref{main} to obtain the convergence of the equation of $\psi^{\epsilon}_k$ to the compressible Euler equation~\eqref{curlfield} with $(\rho,u)=(\rho_k,u_k)$ and
\begin{equation*}
\rho_k(t,x)\to a_k,\,\,u_k(t,x)\to U_k^\infty-A_k^\infty\,\,\text{as}\,\,|x|\to\infty.
\end{equation*}
When $\gamma>0$, the present work deals with the semi-classical limit of \eqref{newtrap} with $A_1=A_2:=A$, $V_1=V_2:=V$ and $U_1^{\infty}=U_2^{\infty}:=U^{\infty}$, i.e., the model~\eqref{trap}. In particular, our argument can be applied to \eqref{newtrap} with $(\text{curl}A_1,\partial_tA_1,\nabla{V}_1)=(\text{curl}A_2,\partial_tA_2,\nabla{V}_2)$ and $U_1^\infty-A_1^\infty={U}_2^\infty-A_2^\infty$.  However, when one of the following conditions holds:
	\begin{itemize}
	\item[\textbf{(i)}] $(\text{curl}A_1,\partial_tA_1,\nabla{V}_1)\neq(\text{curl}A_2,\partial_tA_2,\nabla{V}_2)$,
	\item[\textbf{(ii)}] $(\text{curl}A_1,\partial_tA_1,\nabla{V}_1)=(\text{curl}A_2,\partial_tA_2,\nabla{V}_2)$ and $U_1^\infty-A_1^\infty\neq{U}_2^\infty-A_2^\infty$,
	\end{itemize}
the rigorous semi-classical analysis of \eqref{newtrap} seems a hard problem whose main difficulty lies in the coupling terms of $\psi^{\epsilon}_k$'s and the fact that the corresponding energy functional has no symmetrical property of $|\psi^{\epsilon}_1|^2+|\psi^{\epsilon}_2|^2$. We will keep working on this problem in a forthcoming project.
\\ \\
{\footnotesize {\bf Acknowledgment.}
This work was initiated while Q. Gao was visiting the NCTS at National Taiwan University. She would like to thank the hospitality of the NCTS. Q. Gao is supported by the NNSF of China under grant 11501231 and 11871386, and the ``Fundamental Research Funds for the Central Universities" under grants 2018IB014. C.-C. Lee and T.-C. Lin were partially supported by the Ministry of Science and Technology of Taiwan under grants 104-2115-M-134-001-MY2, 106-2115-M-002-003 and 107-2115-M-007-004. Finally, the authors would also like to thank two anonymous referees for many precious suggestions on the physical meaning of the model and constructive comments that have contributed to the final version of the manuscript.}

%
%
%

\end{document}